\newtheorem{theorem}{\textbf{Theorem}}
\newtheorem{lemma}{\textbf{Lemma}}
\newtheorem{definition}{\textbf{Definition}}
\newtheorem{remark}{\textbf{Remark}}
\newtheorem{corollary}{\textbf{Corollary}}
\newtheorem{example}{\textbf{Example}}
\newcommand{\F}{\mathbb{F}}
\begin{document}
\baselineskip 17pt
\title{\Large\bf  A Generic Construction on Self-orthogonal Algebraic Geometric Codes and Its Applications}
\author{\large  Puyin Wang \quad\quad Jinquan Luo*}\footnotetext{The authors are with School of Mathematics
and Statistics \& Hubei Key Laboratory of Mathematical Sciences, Central China Normal University, Wuhan China 430079.\\
 E-mail: p.wang98@qq.com(P.Wang), luojinquan@ccnu.edu.cn(J.Luo)}
\date{}
\maketitle
{\bf Abstract}:
    In this paper, we introduce a generalized sufficient condition for the self-orthogonality of AG codes, formulated in terms of residues and based on the algebraic structures of finite fields and the geometric properties of algebraic curves. We also present a generic construction of self-orthogonal AG codes from self-dual MDS codes. Using these approaches, we construct several families of self-dual and almost self-dual AG codes. These codes combine two merits: good performance as AG codes whose parameters are close to the Singleton bound, together with Euclidean (or Hermitian) self-dual/self-orthogonal property. Furthermore, some AG codes with Hermitian self-orthogonality can be applied to construct quantum codes with notably good parameters.

{\bf Keywords}: Algebraic geometry code,  Hermitian self-orthogonal code, Euclidean self-orthogonal code, Euclidean self-dual code, Quantum code.

\section{Introduction}

\qquad Let $q$ be a power of a prime $p$, and let $\mathbb{F}_q$ denote the finite field with $q$ elements. An $[n,k,d]$ linear code $C$
over $\mathbb{F}_q$ is a $k$-dimensional subspace of $\mathbb{F}_q^n$ with minimum distance $d.$  For a linear code $C$, the Euclidean dual $C^\perp$ is defined as
$$C^\perp=\left\{(c_1^\prime,c_2^\prime,\ldots,c_n^\prime)\in \mathbb{F}_q^n\mid\sum_{i=1}^nc_i'c_i=0,\:\forall\:(c_1,c_2,\ldots,c_n)\in C\right\}.$$
The code $C$ is Euclidean self-orthogonal if it satisfies $C\subseteq C^\perp$. It is Euclidean self-dual if  $C=C^\perp$.
Similarly, for a linear code $C$ over $\mathbb{F}_{q^2}$,  the Hermitian dual $C^{\perp_H}$ is defined as
$$C^{\perp_H}=\left\{(c_1^\prime,c_2^\prime,\ldots,c_n^\prime)\in \mathbb{F}_{q^2}^n\mid\sum_{i=1}^n{c_i^\prime}^qc_i=0,\:\forall\:(c_1,c_2,\ldots,c_n)\in C\right\}.$$
The code $C$ is  Hermitian self-orthogonal if it satisfies $C\subseteq C^{\perp_H}$. It is  Hermitian self-dual if $C=C^{\perp_H}$.

Euclidean self-orthogonal codes not only possess special structures but also have a wide range of applications. For instance, they can be applied to construct quantum codes \cite{CRSS}. Moreover, self-dual codes, as a special type of self-orthogonal code, are of considerable theoretical interest. They can be applied to construct some $t$-designs, such as \cite{AM} and \cite{GB}. Besides, they are closely connected to lattices and modular forms \cite{BBH}, \cite{H}, \cite{Sv}.  Furthermore, self-dual codes have applications in secret sharing. In \cite{CDGU}, linear secret sharing schemes with specific access structure are constructed from self-dual codes.

In recent years, significant progress has been made in the construction of Euclidean self-orthogonal codes. For instance, \cite{ZKL} explores their construction through cyclic codes. In \cite{SWKS}, Euclidean self-orthogonal codes are constructed via non-unital rings, and \cite{ZLD} presents the construction of several Euclidean self-orthogonal codes in a generic way. In \cite{FL}, several families of codes are constructed using algebraic curves.

For an $[n,k,d]$ code, the well-known Singleton bound states that $d\leq n-k+1$. A code achieving the Singleton bound(i.e., $d=n-k+1$) is called maximum distance separable (MDS). Generalized Reed-Solomon codes are the most important realization of MDS codes. Recently, Euclidean self-dual MDS codes have received great attention. In \cite{JX}, a generic criterion is proposed to verify generalized Reed-Solomon (GRS) codes being self-dual. Later, it is generalized to the extended GRS case in \cite{HY}. In \cite{ZF}, a method to construct longer Euclidean MDS self-dual codes from shorter length is proposed, which is generalized in \cite{XFL}.  Additionally, several classes of MDS self-dual codes are constructed via multiplicative subgroups or additive subgroups in finite fields  (\cite{FLLL}, \cite{FZXF},   \cite{SYS}).  In particular, a large class of MDS self-dual codes are derived in \cite{WLZ} via the union of subsets from  two multiplicative subgroups in finite fields.

Self-orthogonal codes with respect to the Hermitian inner product play a crucial role in the construction of quantum codes (\cite{AK}, \cite{KKKS}). They have been extensively
studied in recent years (\cite{GLLS}). Notably, quantum MDS codes have been constructed using GRS codes, with the Hermitian construction proposed in
\cite{AK} in recent years, such as \cite{JLLX}, \cite{LLGS}, \cite{WT}.

Ashikhmin and Knill present the following Hermitian construction of quantum codes in \cite{AK}, which is useful for deriving quantum codes from classical codes.

\begin{theorem}{\bf (Hermitian Construction)}
    There is a $[[n, n - 2k, d^{\perp}]]_q$ quantum stabilizer code whenever there exists a classical Hermitian self-orthogonal $[n, k, d]_{q^2}$ linear code with dual distance $d^{\perp}$.
\end{theorem}

Using algebraic curves over finite fields, Goppa introduced an influential class of error-correcting codes known as algebraic-geometric (AG) codes \cite{G}. By leveraging the geometric properties of algebraic curves, AG codes achieve higher code rate while preserving larger minimum distance for the same code length. This allows them to exceed the Gilbert-Varshamov (GV) bound, a threshold that classical codes often struggle to reach. By effectively balancing code rate and minimum distance, AG codes deliver superior error-correction performance for a given code length. These properties can be applied to several communication scenarios such as optimal locally repairable codes \cite{MX}.

Based on the geometric properties of algebraic curves, such as selecting an appropriate set of rational points, we can construct a series of self-orthogonal algebraic geometry codes. For instance, certain Hermitian self-orthogonal codes with strong performance are derived from algebraic geometry codes \cite{JLLX}. In fact, by employing appropriate differential, we can characterize  algebraic geometry codes to be Euclidean or Hermitian self-orthogonal. Related work in recent years can be found in \cite{HMMF},\cite{J}. In particular, H. Stichtenoth showed in \cite{ST} that there exists a family of algebraic geometry codes achieving the TVZ bound that are equivalent to self-orthogonal codes.

In this paper, we construct several classes of self-orthogonal codes using two types of maximal curves. The parameters of these codes are either determined explicitly or estimated. The paper is organized as follows. In Section~$2$, we present some preliminary results on algebraic geometry (AG) codes that will be useful throughout the paper. In Section~$3$, we generalize the self-orthogonality criterion based on residues given in~\cite{S2}. In Section~$4$, we construct several families of Hermitian self-orthogonal AG codes from two types of maximal curves and provide some quantum codes with large minimum distances. In Section~$5$, we construct several families of Euclidean self-orthogonal AG codes from the same types of maximal curves, and we also present almost Euclidean self-dual codes with large minimum distances. In Section~$6$, we conclude the paper with a comparison of our results and previously known constructions.

\section{Preliminary}

\subsection{Algebraic functional fields}
\qquad In this section, we provide an introduction to algebraic functional fields. For more detailed information, see \cite{S}. Let $\mathbb{F}_q$ denote the finite field with $q$ elements, where $q$ is a power of a prime number $p$.

\begin{definition}

     An algebraic function field $F/\mathbb{F}_q$ of one variable over $\F_q$ is an extension field $F\supseteq \mathbb{F}_q$ such that $F$ is a finite algebraic extension of $\mathbb{F}_q(x)$ for some element $x\in F$ which is transcendental over $\mathbb{F}_q$. The algebraic closure of $\mathbb{F}_q$ in $F$ is called the constant field of $F/\mathbb{F}_q$. $\hfill\blacksquare$
\end{definition}

 Let $\chi$ be a smooth, projective, absolutely irreducible curve of genus $g$ defined over $\mathbb{F}_q$. Denote by $\mathbb{F}_q(\chi)$ the function field of $\mathbb{F}_q(\chi)/\mathbb{F}_q$. Since $\chi$ is absolutely irreducible, $\mathbb{F}_q$ is the constant field of $\mathbb{F}_q(\chi)$.

 \begin{definition}

    A place $P$ of the function field $\mathbb{F}_q(\chi)/\mathbb{F}_q$ is the maximal ideal of some valuation ring $\mathcal{O}$ which satisfies $\mathbb{F}_q\subsetneqq \mathcal{O}\subsetneqq \mathbb{F}_q(\chi)$. Every element $t \in P$ such that $P=t\mathcal{O}$ is called a uniformizing variable for $P$. Denote the set of all places of $\mathbb{F}_q(\chi)$ by $\mathbb{P} _{\F_q}(\chi)$. $\hfill\blacksquare$
\end{definition}

\begin{definition}

    Suppose that $P$ is a place of the function field $\mathbb{F}_q(\chi)/\mathbb{F}_q$, corresponding to the valuation ring $\mathcal{O}_P$. The degree of $P$ is then defined by
    $$
    \deg(P) = [\mathcal{O}_P/P : \mathbb{F}_q].
    $$
    Denote by $\chi(\mathbb{F}_q)$ the set of all places of degree one (also called rational places) of $\mathbb{F}_q(\chi)/\mathbb{F}_q$.
\end{definition}

Let $v_P$ denote the normalized discrete valuation associated with a point $P$ on the curve $\chi$. Throughout this paper, we adopt the simplified notation $(x-\alpha)$ to represent the place generated by $x-\alpha$ in the rational function field $\mathbb{F}_q(x)$, where $\alpha\in\mathbb{F}_q$. Similarly, let $P_\infty$ be the unique infinite place of $\mathbb{F}_q(x)$, corresponding to the simple pole of $x$.

Denote by $\#S$ the number of elements in a finite set $S$. Then $\chi$ is said to be maximal if it attains the upper Hasse-Weil bound: $\#\chi(\mathbb{F}_q)=q+1+2g\sqrt{q}$.

\begin{definition}

    A divisor of the function field $\mathbb{F}_q(\chi)/\mathbb{F}_q$ is a formal sum
    $$
    D = \sum_{P \in \mathbb{P} _{\F_q}(\chi)} n_P P, \quad n_P \in \mathbb{Z},
    $$
    where all but finitely many $n_P$ are zero.

    The support of $D$ is defined as
    $$
    \operatorname{supp}(D) := \{ P \in \mathbb{P} _{\F_q}(\chi) \mid n_P \neq 0 \}.
    $$
    For convenience, we may also write
    $$
    D = \sum_{P \in S} n_P P,
    $$
    where $S \subseteq \mathbb{P} _{\F_q}(\chi)$ is a finite set containing $\operatorname{supp}(D)$.
    The degree of $D$ is
    $$
    \deg(D) = \sum_{P \in \operatorname{supp}(D)} n_P\deg(P).
    $$
\end{definition}
Let $\Omega$ be the differential space of $\chi$. We define
$$
\Omega(G)=\{\omega\in\Omega\setminus\{0\}\mid\mathrm{div}(\omega)\geq G\}\cup \{0\},
$$
where $(\omega)$ is the canonical divisor corresponding to $\omega$. In fact, all canonical divisors have degree $2g-2$. Define the principal divisor of $x \in \mathbb{F}_q(\chi)$ as the
following:
$$
(x):=\sum\limits_{P \in \mathbb{P} _{\F_q}(\chi)} {{v_P}(x)P}.
$$
The Riemann-Roch space associated to $D$ is
$$
\mathscr{L}(D)=\{x\in \mathbb{F}_q(\chi)\mid(x)+D\geq 0\}\cup\{0\}.
$$
We denote the dimension of $\mathscr{L}(D)$ by $l(D)$.

The connection between the above concepts is elaborated by the following result.

 {\bf Riemann-Roch Theorem:}   Let $W$ be a canonical divisor of $\mathbb{F}_q(\chi)$. Then for each divisor $G$ of $\mathbb{F}_q(\chi)$, we have
    $$l(G)=\deg(G)+1-g+l(W-G).$$
Especially, we have $l(G)=\deg(G)+1-g$ if $\deg(G)> 2g-2$. $\hfill\blacksquare$

\subsection{GRS codes}
In this part, we will introduce some basic knowledge about generalized Reed-Solomon (GRS for short) codes.

For $1\leq n\leq q$, we choose $A=\{\alpha_1,\alpha_2,\ldots,\alpha_n\}$ and $V=(v_1,v_2,\ldots,v_n)$, where $\alpha_i\in\mathbb{F}_{q}$ are all distinct and $v_i\in\mathbb{F}_q^*$ for $1\leq i\leq n.$ Then the GRS code of length $n$ associated with $A$ and $V$ over $\mathbb{F}_q$ is
$${\rm GRS}_k(A,V,q)=\{(v_1f(\alpha_1),\ldots,v_nf(\alpha_n)):f(x)\in\mathbb{F}_q[x],\deg(f(x))\leq k-1\},$$
for $1\leq k\leq n$.

Moreover, the EGRS code (extended generalized Reed-Solomon codes) associated with $A$ and $V$ is defined by
$${\rm EGRS}_k(A,V,q)=\{(v_1f(\alpha_1),\ldots,v_nf(\alpha_n),f_{k-1}):f(x)\in\mathbb{F}_q[x],\deg(f(x))\leq k-1\},$$
where $1\leq k\leq n$ and $f_{k-1}$ is the coefficient of $x^{k-1}$ in $f(x)$.

\subsection{AG codes}

\qquad  Suppose that $P_{1},\ldots,P_{n}$ are pairwise distinct rational places of $\mathbb{F}_q(\chi)$. Suppose that $D=P_{1}+\dots+P_{n}$ and $G$ are divisors of $\mathbb{F}_q(\chi)/\mathbb{F}_q$ such that ${\rm supp}(G)\cap{\rm supp}(D)=\emptyset$.

Consider the evaluation map $\mathrm{ev}_{D}:\mathscr{L}(G)\to \mathbb{F}_q^{n}$ given by
$$ev_{D}(f)=(f(P_{1}),\ldots,f(P_{n}))\in \mathbb{F}_q^{n}$$
under the isomorphism between the residue fields of rational places and $\mathbb{F}_q$. The image of $\mathscr{L}(G)$ under $ev_{D}$ is the algebraic geometric code (or AG code
shortly) $C_\mathscr{L}(D,G)$. The Euclidean dual code of $C_\mathscr{L}(D,G)$ is
$$C_{\Omega}(D,G)=\{(\mathrm{res}_{P_1}(\omega),\ldots,\mathrm{res}_{P_n}(\omega))\mid\omega\in\Omega_{F}(G-D)\}.$$

\begin{theorem}\label{AG}{\rm(\cite{S} Theorems 2.2.2  and  2.2.7)}
    $C_{\mathscr{L}}(D,G)$ is an $[n,k,d]$ code with parameters
    \begin{equation*}\notag
        k=l(G)-l(G-D), d\geq n-\deg(G).
    \end{equation*}
   Its dual code $C_{\Omega}(D,G)$ is an $[n,k^{\prime},d^{\prime}]$ code with parameters
    \begin{equation*}
        k^{\prime}=n-l(G)+l(G-D), d^{\prime}\geq\deg(G)-(2g-2).
    \end{equation*}
    $\hfill\blacksquare$
\end{theorem}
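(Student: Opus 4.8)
The plan is to establish each of the four parameter claims for the two codes $C_{\mathscr{L}}(D,G)$ and $C_{\Omega}(D,G)$ in turn. The core structural facts I will use are: the evaluation map $\mathrm{ev}_D:\mathscr{L}(G)\to\mathbb{F}_q^n$ is $\mathbb{F}_q$-linear, so $C_{\mathscr{L}}(D,G)=\mathrm{ev}_D(\mathscr{L}(G))$ is indeed a linear code, and the kernel of $\mathrm{ev}_D$ is precisely $\mathscr{L}(G-D)$. This last identity is the linchpin: a function $f\in\mathscr{L}(G)$ vanishes at every $P_i$ (i.e. $\mathrm{ev}_D(f)=0$) exactly when $v_{P_i}(f)\geq 1$ for all $i$, which means $(f)+G-D\geq 0$, i.e. $f\in\mathscr{L}(G-D)$. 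Granting this, the dimension formula $k=l(G)-l(G-D)$ follows immediately from the rank–nullity theorem applied to $\mathrm{ev}_D$.

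For the minimum distance bound $d\geq n-\deg(G)$, I would argue by contradiction on the weight of a nonzero codeword. Suppose some nonzero $c=\mathrm{ev}_D(f)$ has weight $w\leq n-\deg(G)$ less than the claimed bound; then $f$ vanishes at some set of at least $n-(n-\deg(G))=\deg(G)$ among the places $P_1,\dots,P_n$, say on a sub-divisor $D'\leq D$ with $\deg(D')\geq \deg(G)$. Then $f\in\mathscr{L}(G-D')$, but $\deg(G-D')<0$ forces $f=0$, a contradiction. This pins down $d\geq n-\deg(G)$. Here I am using the basic fact (assumed from the preliminaries) that a divisor of negative degree has trivial Riemann–Roch space.

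For the dual code $C_{\Omega}(D,G)$, the parameters should be read off from the general theory of the residue pairing. The key input is that $C_{\Omega}(D,G)$ is the Euclidean dual of $C_{\mathscr{L}}(D,G)$, which I would take as established (via the residue theorem: for $f\in\mathscr{L}(G)$ and $\omega\in\Omega(G-D)$ the sum $\sum_i f(P_i)\,\mathrm{res}_{P_i}(\omega)$ equals the total residue of the differential $f\omega$, which is $0$). Consequently $k'=n-k=n-l(G)+l(G-D)$ directly. The distance bound $d'\geq\deg(G)-(2g-2)$ I would obtain by the same weight-counting argument, now applied to differentials: a low-weight dual codeword corresponds to an $\omega$ whose associated residue vector has many zero entries, forcing $\omega$ into a space $\Omega(G-D+D'')$ of differentials with too many prescribed zeros, and comparing degrees against the degree $2g-2$ of a canonical divisor yields the bound.

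**The main obstacle** will be the dual-code identification, specifically verifying that the residue pairing is nondegenerate and that $C_{\Omega}(D,G)$ is genuinely the full Euclidean dual rather than merely contained in it; this rests on the residue theorem and on a dimension count confirming $\dim C_{\Omega}(D,G)=n-k$. The dimension formulas and the two Singleton-type distance bounds are then comparatively routine applications of Riemann–Roch and the negative-degree vanishing principle, so I would spend most of the proof-writing effort making the kernel identity $\ker(\mathrm{ev}_D)=\mathscr{L}(G-D)$ and the residue-duality precise.
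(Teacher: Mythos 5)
The paper offers no proof of this theorem at all --- it is quoted directly from Stichtenoth (\cite{S}, Theorems 2.2.2 and 2.2.7) --- and your proposal correctly reconstructs exactly the standard argument of that source: the kernel identity $\ker(\mathrm{ev}_D)=\mathscr{L}(G-D)$ plus rank--nullity for $k$, degree counting against subdivisors of $D$ for both distance bounds, and the residue theorem combined with a Riemann--Roch dimension count to show $C_{\Omega}(D,G)$ is the full Euclidean dual rather than merely contained in it (the one step you rightly flag as the real work). One cosmetic repair: in your distance argument the contradiction hypothesis must be the strict inequality $w<n-\deg(G)$, so that $f$ vanishes on a subdivisor $D'\leq D$ with $\deg(G-D')<0$; with $w=n-\deg(G)$ you only get $\deg(G-D')\leq 0$, which does not force $f=0$ --- but such a word also does not violate the claimed bound, so the statement survives intact.
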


The following result in \cite{S} shows the connection between $C_\mathscr{L}(D,G)$ and $C_\Omega(D,G)$.
\begin{theorem}\label{D}{\rm(\cite{S} Proposition 8.1.2)}
    Consider $D=\sum\limits_{i=1}^n P_i$, where each $P_i$ is a rational place. Let $\eta$ be a differential satisfying $v_{P_i}(\eta)=-1$ and $\operatorname{res}_{P_i}(\eta)=1$ for all $i=1,2,\ldots,n$. Then
    \begin{equation*}
    C_\Omega(D,G)=C_\mathscr{L}(D,D-G+(\eta)).
    \end{equation*}
     $\hfill\blacksquare$
\end{theorem}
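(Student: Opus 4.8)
The plan is to realize both codes as images of Riemann--Roch spaces and to show those spaces correspond under multiplication by the fixed differential $\eta$. The starting point is the structural fact that the space of differentials $\Omega$ is one-dimensional as a vector space over the function field $\mathbb{F}_q(\chi)$; consequently, for the fixed nonzero $\eta$, the map $\phi:\mathbb{F}_q(\chi)\to\Omega$ given by $f\mapsto f\eta$ is an $\mathbb{F}_q(\chi)$-linear isomorphism. Since $\mathrm{div}(f\eta)=(f)+(\eta)$, I would first check that $\phi$ restricts to an isomorphism $\mathscr{L}(D-G+(\eta))\xrightarrow{\sim}\Omega_{F}(G-D)$: indeed $f\in\mathscr{L}(D-G+(\eta))$ means $(f)+D-G+(\eta)\geq 0$, which is exactly $(f\eta)\geq G-D$, i.e. $f\eta\in\Omega_{F}(G-D)$.

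The heart of the argument is a local residue computation showing that $\phi$ intertwines the two evaluation maps, namely that $\mathrm{res}_{P_i}(f\eta)=f(P_i)$ for every $i$ and every $f\in\mathscr{L}(D-G+(\eta))$. First I would observe that because $\mathrm{supp}(G)\cap\mathrm{supp}(D)=\emptyset$, the coefficient of $P_i$ in the divisor $D-G+(\eta)$ equals $1-0+(-1)=0$, using $v_{P_i}(\eta)=-1$; hence $v_{P_i}(f)\geq 0$, so $f$ is regular at $P_i$ and $f(P_i)$ is well defined. Choosing a local uniformizer $t$ at $P_i$ and writing the Laurent expansions $\eta=(t^{-1}+\cdots)\,dt$, whose leading coefficient is $\mathrm{res}_{P_i}(\eta)=1$, together with $f=f(P_i)+b_1 t+\cdots$, the coefficient of $t^{-1}\,dt$ in the product $f\eta$ is precisely $f(P_i)\cdot 1$, which is the residue.

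With this identity in hand the conclusion is immediate: as $\omega$ runs over $\Omega_{F}(G-D)$, write $\omega=f\eta$ with $f$ running over $\mathscr{L}(D-G+(\eta))$ via the isomorphism $\phi$; then $(\mathrm{res}_{P_1}(\omega),\ldots,\mathrm{res}_{P_n}(\omega))=(f(P_1),\ldots,f(P_n))$, so the defining generating sets of $C_\Omega(D,G)$ and $C_\mathscr{L}(D,D-G+(\eta))$ coincide, giving $C_\Omega(D,G)=C_\mathscr{L}(D,D-G+(\eta))$.

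I expect the main obstacle to be the residue computation in the middle step, since it is where the two hypotheses on $\eta$ enter in an essential way: the condition $v_{P_i}(\eta)=-1$ guarantees that the $P_i$-coefficient of $D-G+(\eta)$ vanishes, so that $f(P_i)$ is defined and no higher-order pole of $\eta$ contaminates the residue, while $\mathrm{res}_{P_i}(\eta)=1$ normalizes the leading coefficient so that the residue reads off exactly $f(P_i)$ rather than a scalar multiple of it. Care is also needed to confirm that $\phi$ maps \emph{onto} $\Omega_{F}(G-D)$, which rests on the one-dimensionality of $\Omega$ over $\mathbb{F}_q(\chi)$.
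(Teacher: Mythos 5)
Your proof is correct and is essentially the standard argument: the paper itself gives no proof, citing Stichtenoth, and your route --- using one-dimensionality of $\Omega$ over $\mathbb{F}_q(\chi)$ so that $f\mapsto f\eta$ identifies $\mathscr{L}(D-G+(\eta))$ with $\Omega_F(G-D)$, then the local computation $\mathrm{res}_{P_i}(f\eta)=f(P_i)\,\mathrm{res}_{P_i}(\eta)=f(P_i)$ --- is exactly the textbook proof of this identity. Your accounting of where each hypothesis enters (the $P_i$-coefficient of $D-G+(\eta)$ vanishing so $f(P_i)$ is defined, and $\mathrm{res}_{P_i}(\eta)=1$ normalizing the residue) is accurate and complete.
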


In our construction, the following result from \cite{HY} is  instrumental.

\begin{lemma}\label{M}
Suppose that $m\mid q-1$ and $\alpha\in \mathbb{F}_q$ is an $m$-th primitive root. Then, for $1\leq i\leq m$:
\begin{equation*}
\prod\limits_{1\leq j\leq m,j \ne i} {({\alpha ^i} - {\alpha ^j})} = m{\alpha ^{ - i}}.
\end{equation*}
 $\hfill\blacksquare$
\end{lemma}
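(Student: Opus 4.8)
The plan is to exploit the factorization of $x^m-1$ over $\mathbb{F}_q$. Since $\alpha$ has multiplicative order $m$ and $m\mid q-1$, the powers $\alpha^1,\alpha^2,\ldots,\alpha^m$ are $m$ pairwise distinct elements, each satisfying $(\alpha^j)^m=1$. Hence they are precisely the $m$ roots of $x^m-1$ in $\mathbb{F}_q$, which yields the complete factorization
$$x^m-1=\prod_{j=1}^m (x-\alpha^j).$$
It is worth recording at the outset that $m\mid q-1$ forces $p\nmid m$ (where $q=p^s$), so $m$ is a nonzero, invertible element of $\mathbb{F}_q$; this is what makes the claimed right-hand side $m\alpha^{-i}$ meaningful and nonzero.

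First I would differentiate both sides formally. The formal derivative is a purely algebraic operation, valid in any characteristic, so the left side becomes $mx^{m-1}$, while the product rule applied to the right side gives
$$\frac{d}{dx}\prod_{j=1}^m (x-\alpha^j)=\sum_{i=1}^m\ \prod_{\substack{1\leq j\leq m\\ j\neq i}}(x-\alpha^j).$$

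Next I would evaluate this identity at the point $x=\alpha^i$. On the right-hand side, every summand indexed by some $i'\neq i$ contains the factor $(\alpha^i-\alpha^i)=0$ and therefore vanishes; only the $i$-th summand survives, leaving exactly $\prod_{j\neq i}(\alpha^i-\alpha^j)$. Matching this against the left-hand side $m(\alpha^i)^{m-1}$ and simplifying using $\alpha^m=1$,
$$\prod_{\substack{1\leq j\leq m\\ j\neq i}}(\alpha^i-\alpha^j)=m(\alpha^i)^{m-1}=m\,\alpha^{im}\alpha^{-i}=m\,(\alpha^m)^{i}\alpha^{-i}=m\,\alpha^{-i},$$
which is the desired formula.

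There is no genuine obstacle here; the argument is a standard derivative-of-the-defining-polynomial computation. The only point requiring minor care is conceptual rather than technical: one must invoke the \emph{formal} derivative over $\mathbb{F}_q$ rather than an analytic one, and confirm that the coefficient $m$ does not collapse to zero in characteristic $p$. Both are handled by the observation that $p\nmid m$, recorded above, so the final expression is a legitimate nonzero element of $\mathbb{F}_q$.
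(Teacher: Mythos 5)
Your proof is correct, and it is the standard argument: the paper itself states Lemma \ref{M} without proof (quoting it from \cite{HY}), and the derivative-of-$x^m-1$ computation you give, evaluating $\frac{d}{dx}\prod_{j=1}^m(x-\alpha^j)=mx^{m-1}$ at $x=\alpha^i$ and simplifying via $\alpha^m=1$, is exactly the usual derivation behind that cited result, with the characteristic issue ($p\nmid m$ since $m\mid q-1$) correctly handled. As a minor note, you also tacitly corrected the typo in the statement, where the product index should read $1\leq j\leq m$, $j\neq i$.
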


\section{Orthogonal Decision Criteria}
In this section, we present new criteria for constructing Hermitian and Euclidean self-orthogonal codes. Based on the connection
$C_\Omega(D,G)=C_\mathscr{L}(D,D-G+(\eta))$, we derive the following two key lemmas that play a central role in this paper.

Throughout, let $\chi$ denote a smooth, projective, absolutely irreducible curve of genus $g$ defined over $\mathbb{F}_q$, and let $\mathbb{F}_q(\chi)$ be its function field over $\mathbb{F}_q$. Let $D$ and $G$ be divisors of $\mathbb{F}_q(\chi)/\mathbb{F}_q$ such that ${\rm supp}(G) \cap {\rm supp}(D) = \emptyset$.

\begin{lemma}\label{ML}

For an AG code $C_\mathscr{L}$ over $\mathbb{F}_{q^2}$, assume that there is a differential $\eta$ with the following properties:
\begin{itemize}
    \item[{\rm(1).}] $(\eta)\geq (q+1)G-D$;
    \item[{\rm(2).}] For $1\leq i\leq n$, $\mathrm{res}_{P_i}\eta$ belongs to the same coset of $\mathbb{F}_q^*$.
\end{itemize}
Then there exists a divisor $G^{\prime}$ equivalent to $G$ such that $C_\mathscr{L}(D,G^{\prime})$ is Hermitian self-orthogonal.
\end{lemma}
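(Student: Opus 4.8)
The plan is to reduce Hermitian self-orthogonality to a single bilinear identity, establish that identity through the residue theorem using hypothesis (1), and then absorb the residue factors using hypothesis (2) by passing to a linearly equivalent divisor.

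First I would recall that $C_\mathscr{L}(D,G)$ is Hermitian self-orthogonal precisely when $\sum_{i=1}^n f(P_i)\,g(P_i)^q=0$ for all $f,g\in\mathscr{L}(G)$. The key observation is that $g(P_i)^q=g^q(P_i)$, where $g^q$ denotes the $q$-th power of $g$ taken in the function field, so that $fg^q\in\mathscr{L}((q+1)G)$ and hence $(fg^q)\geq-(q+1)G$. Consider the differential $\omega=fg^q\eta$. By hypothesis (1),
$$(\omega)=(fg^q)+(\eta)\geq-(q+1)G+\bigl((q+1)G-D\bigr)=-D,$$
so $\omega$ has at worst simple poles, located only among $P_1,\dots,P_n$ (here we use $\mathrm{supp}(G)\cap\mathrm{supp}(D)=\emptyset$, so that $f,g$ are regular at each $P_i$). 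Since $\omega$ is regular away from $D$, the residue theorem $\sum_{P}\mathrm{res}_P(\omega)=0$ collapses to $\sum_{i=1}^n\mathrm{res}_{P_i}(\omega)=0$. Writing $r_i:=\mathrm{res}_{P_i}(\eta)$ and using $\mathrm{res}_{P_i}(fg^q\eta)=f(P_i)g(P_i)^q r_i$, this yields the master identity $\sum_{i=1}^n r_i\,f(P_i)g(P_i)^q=0$ for all $f,g\in\mathscr{L}(G)$.

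Next I would exploit hypothesis (2): since all $r_i$ lie in a single coset of $\mathbb{F}_q^*$, write $r_i=c\,s_i$ with $c\in\mathbb{F}_{q^2}^*$ fixed and $s_i\in\mathbb{F}_q^*$, and divide by $c$ to get $\sum_{i=1}^n s_i\,f(P_i)g(P_i)^q=0$. The goal is to turn the weights $s_i$ into Hermitian norms. Because the norm map $N\colon\mathbb{F}_{q^2}^*\to\mathbb{F}_q^*,\ x\mapsto x^{q+1}$ is surjective, I can choose $b_i\in\mathbb{F}_{q^2}^*$ with $b_i^{q+1}=s_i$ for each $i$. The diagonal rescaling $(f(P_i))_i\mapsto(b_i f(P_i))_i$ then satisfies $\sum_i (b_i f(P_i))(b_i g(P_i))^q=\sum_i s_i f(P_i)g(P_i)^q=0$, so the rescaled code is Hermitian self-orthogonal.

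It remains to realize this rescaling as $C_\mathscr{L}(D,G')$ for some $G'\sim G$, which I regard as the main obstacle. By the weak approximation theorem there is a function $u\in\mathbb{F}_{q^2}(\chi)$ with $u(P_i)=b_i$ for all $i$; since each $b_i\neq0$ we have $v_{P_i}(u)=0$, so $G':=G-(u)$ is linearly equivalent to $G$ and still disjoint in support from $D$. A short computation shows $\mathscr{L}(G')=u\cdot\mathscr{L}(G)$, whence $\mathrm{ev}_D$ sends $uf$ to $(b_i f(P_i))_i$ and $C_\mathscr{L}(D,G')$ is exactly the rescaled code, which is Hermitian self-orthogonal by the previous step. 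The delicate points to verify carefully are that the factor $q+1$ in hypothesis (1) is precisely what forces $\omega$ to have only simple poles along $D$ (so that no extra residues intrude in the residue theorem), and that $u$ can be chosen with neither zeros nor poles at the $P_i$ so that $G'$ genuinely avoids $\mathrm{supp}(D)$; both follow from the normalizations introduced above.
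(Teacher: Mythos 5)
Your proposal is correct and follows essentially the same route as the paper: both proofs use surjectivity of the norm map to write the residues as $c\,b_i^{q+1}$, the approximation (Chinese Remainder) theorem to produce $u$ with $u(P_i)=b_i$ and $v_{P_i}(u)=0$, and the twist $G'=G-(u)$ to absorb the weights. The only cosmetic difference is that you verify the orthogonality identity directly by applying the residue theorem to $fg^q\eta$, whereas the paper cites the duality $C_\mathscr{L}(D,G')^{\perp}=C_\mathscr{L}(D,D-G'+(h^{-1}u^{-(q+1)}\eta))$ of Theorem \ref{D} together with the containment $\mathscr{L}(qG')\subseteq\mathscr{L}(D-G'+(h^{-1}u^{-(q+1)}\eta))$ — and that duality is itself proved by exactly your residue computation.
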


\begin{proof}

    Denote by $\mathrm{res}_{P_i}\eta=hb_i^{q+1}\in h\mathbb{F}_q^*$ for some $b_i\in \mathbb{F}_{q^2}^*$.
    By the Weak Approximation Theorem, there exists $u\in \mathbb{F}_{q^2}(\chi)$ such that $v_{P_i}(u)=0$ and $u(P_i)=b_i$. Let $G^\prime=G-(u)$. Then, we obtain:
    $$(q+1)G^\prime-D=(q+1)G-D-(q+1)(u)\leq (u^{-(q+1)}\eta).$$
    Note that $f \in \mathscr{L}(G)$ implies  $f^q \in \mathscr{L}(qG)$. Since $\mathrm{res}_{P_i}(h^{-1}u^{-(q+1)}\eta)=1$ for $1\leq i\leq n$, we obtain:
    $$
    C_\mathscr{L}(D,G^{\prime})^q\subseteq C_\mathscr{L}(D,G^{\prime})^{\perp}=C_\mathscr{L}(D,D-G^{\prime}+(h^{-1}u^{-(q+1)}\eta))
    $$
    by Theorem \ref{D}.
    This establishes the conclusion.
\end{proof}

The above result can be viewed as a generalization of the following criterion from \cite{S2} to the Hermitian case:

\begin{lemma}

    {\rm(\cite{S2})} Assume that there is a differential $\eta$ with the following properties:
    \begin{itemize}
        \item[{\rm(1).}] $(\eta)\geq (q+1)G-D$;
        \item[{\rm(2).}] For $i=1,\cdots,n$, there is an element $b_i\in\mathbb{F}_q^*$ with ${\rm res}_{P_i}\eta=b_i^2$.
    \end{itemize}
    Then there exists a divisor $G^{\prime}$ equivalent to $G$ such that $C_\mathscr{L}(D,G^{\prime})$ is Euclidean self-orthogonal. Moreover, if we replace condition {\rm (1)} by  $(\eta)=2G-D$, then the code $C_\mathscr{L}(D, G^{\prime})$ is self-dual.    $\hfill\blacksquare$
\end{lemma}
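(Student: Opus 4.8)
The plan is to mimic the proof of Lemma~\ref{ML} (the Hermitian case), but with the Hermitian map $b\mapsto b^{q+1}$ replaced by the Euclidean squaring map $b\mapsto b^2$, and with the exponent $q+1$ replaced by $2$ throughout. First I would use hypothesis~(2) to write $\mathrm{res}_{P_i}\eta=b_i^2$ with $b_i\in\mathbb{F}_q^*$, and invoke the Chinese Remainder Theorem (weak approximation) to produce an element $u\in\mathbb{F}_q(\chi)$ with $v_{P_i}(u)=0$ and $u(P_i)=b_i$ for all $i$. Setting $G'=G-(u)$ gives a divisor equivalent to $G$, and I would verify the divisor inequality
$$2G'-D=2G-D-2(u)\leq(u^{-2}\eta),$$
which follows by subtracting $2(u)$ from hypothesis~(1) in the form $2G-D\leq(\eta)$ and using $(u^{-2}\eta)=(\eta)-2(u)$.

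Next I would compute the residues of the rescaled differential $u^{-2}\eta$ at each $P_i$. Since $u(P_i)=b_i$ and $\mathrm{res}_{P_i}\eta=b_i^2$, a standard residue computation gives $\mathrm{res}_{P_i}(u^{-2}\eta)=b_i^{-2}\cdot b_i^2=1$ for every $i$; here one uses that $u^{-2}$ is regular and nonvanishing at $P_i$, so it factors through the residue as multiplication by $u(P_i)^{-2}$. This is precisely the normalization required to apply Theorem~\ref{D} with the differential $\eta':=u^{-2}\eta$, which satisfies $v_{P_i}(\eta')=-1$ and $\mathrm{res}_{P_i}(\eta')=1$. The inequality above says $2G'-D\leq(\eta')$, equivalently $G'\leq D-G'+(\eta')$, so that $\mathscr{L}(G')\subseteq\mathscr{L}(D-G'+(\eta'))$ and hence
$$C_\mathscr{L}(D,G')\subseteq C_\mathscr{L}(D,D-G'+(\eta'))=C_\Omega(D,G')=C_\mathscr{L}(D,G')^{\perp}.$$
This establishes Euclidean self-orthogonality.

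For the self-dual addendum, I would observe that replacing hypothesis~(1) by the equality $(\eta)=2G-D$ turns the divisor inequality into the equality $2G'-D=(\eta')$, equivalently $G'=D-G'+(\eta')$. Then the two Riemann--Roch spaces $\mathscr{L}(G')$ and $\mathscr{L}(D-G'+(\eta'))$ coincide, forcing equality $C_\mathscr{L}(D,G')=C_\mathscr{L}(D,G')^{\perp}$, i.e. self-duality. (One should note here that self-duality additionally requires the dimension to be $n/2$, which is automatic once the two codes are equal since $\dim C+\dim C^{\perp}=n$.)

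The only genuinely delicate point, and the step I expect to be the main obstacle, is justifying the residue computation $\mathrm{res}_{P_i}(u^{-2}\eta)=u(P_i)^{-2}\,\mathrm{res}_{P_i}(\eta)$. This is the scaling behaviour of residues under multiplication by a function that is a unit at $P_i$: if $w=u^{-2}$ is regular with nonzero value $w(P_i)$ at a simple pole of $\eta$, then the residue scales by the value $w(P_i)$. The cleanest way to see this is to work in a local uniformizer $t$ at $P_i$, write $\eta=(c_{-1}t^{-1}+\cdots)\,dt$ so that $\mathrm{res}_{P_i}(\eta)=c_{-1}$, expand $w=w(P_i)+O(t)$, and read off the $t^{-1}$-coefficient of $w\eta$, which is $w(P_i)c_{-1}$. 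Everything else is a direct transcription of the Hermitian argument with the square map in place of the norm map, so no new ideas beyond those already deployed in Lemma~\ref{ML} are needed.
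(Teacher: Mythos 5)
Your proof is correct and follows essentially the same route as the paper, which in fact quotes this lemma from \cite{S2} without proof but establishes the neighbouring Lemmas \ref{ML} and \ref{ML2} by exactly your argument: weak approximation (CRT) to produce $u$ with $v_{P_i}(u)=0$ and $u(P_i)=b_i$, the twist $G'=G-(u)$, the unit-scaling of residues giving $\mathrm{res}_{P_i}(u^{-2}\eta)=1$, and Theorem \ref{D} to identify $C_\Omega(D,G')$ with $C_\mathscr{L}(D,D-G'+(u^{-2}\eta))$. One remark: you tacitly read hypothesis (1) as $(\eta)\geq 2G-D$ --- correctly so, since the printed $(q+1)G-D$ is a copy-and-paste slip from the Hermitian Lemma \ref{ML}, as both the ``stronger condition $(\eta)=2G-D$'' clause and the statement of Lemma \ref{ML2} confirm.
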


Moreover, we can generalize the previous lemma in another way:

\begin{lemma}\label{ML2}
For an AG code $C_\mathscr{L}(D,G)$ over $\mathbb{F}_q$, if there is a differential $\eta$ with the following properties:
\begin{itemize}
    \item[{\rm(1).}] $(\eta)\geq 2G-D$;
    \item[{\rm(2).}] For $1\leq i\leq n$, $\mathrm{res}_{P_i}\eta$ are all squares or all non-squares.
\end{itemize}
then there exists a divisor $G^{\prime}$ equivalent to $G$ such that $C_\mathscr{L}(D,G^{\prime})$ is Euclidean self-orthogonal. Moreover, if we replace condition {\rm (1)} by  $(\eta)=2G-D$, then the code $C_\mathscr{L}(D, G^{\prime})$ is self-dual.
\end{lemma}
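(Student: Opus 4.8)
The plan is to mimic the structure of the proof of Lemma~\ref{ML}, replacing the Hermitian exponent $q+1$ by $2$ and the coset decomposition of $\mathbb{F}_q^*$ by the square/non-square dichotomy. First I would use condition~(2) to write the residues in a normalized form: since all the $\mathrm{res}_{P_i}\eta$ lie in a single coset of $(\mathbb{F}_q^*)^2$ inside $\mathbb{F}_q^*$, there is a fixed $h\in\mathbb{F}_q^*$ (take $h=1$ when they are all squares, and $h$ a fixed non-square otherwise) together with elements $b_i\in\mathbb{F}_q^*$ such that $\mathrm{res}_{P_i}\eta=hb_i^2$ for $1\le i\le n$.

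Next, by the Chinese Remainder Theorem for function fields I would produce $u\in\mathbb{F}_q(\chi)$ with $v_{P_i}(u)=0$ and $u(P_i)=b_i$ for every $i$, and set $G'=G-(u)$, which is linearly equivalent to $G$. The divisor computation is immediate: from $(\eta)\ge 2G-D$ one gets
$$(u^{-2}\eta)=(\eta)-2(u)\ge 2G-D-2(u)=2G'-D,$$
so the twisted differential $\omega:=h^{-1}u^{-2}\eta$ satisfies $(\omega)=(u^{-2}\eta)\ge 2G'-D$. The purpose of the normalization is the residue bookkeeping: since $u$ is a unit at $P_i$, the scaling rule $\mathrm{res}_{P_i}(g\eta)=g(P_i)\,\mathrm{res}_{P_i}(\eta)$ for $g$ regular and nonzero at $P_i$ yields $\mathrm{res}_{P_i}(\omega)=h^{-1}b_i^{-2}\cdot hb_i^2=1$.

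I would then observe that the two hypotheses together force $\omega$ to have simple poles exactly at the evaluation points: condition~(1) gives $v_{P_i}(\eta)\ge -1$ (because $P_i\notin\mathrm{supp}(G)$, so the coefficient of $P_i$ in $2G-D$ is $-1$), while the residues being nonzero forces $v_{P_i}(\eta)\le -1$; hence $v_{P_i}(\eta)=-1$, and likewise $v_{P_i}(\omega)=-1$. Thus $\omega$ meets the hypotheses of Theorem~\ref{D}, giving $C_\Omega(D,G')=C_\mathscr{L}(D,D-G'+(\omega))$. Because $(\omega)\ge 2G'-D$ is equivalent to $G'\le D-G'+(\omega)$, we have $\mathscr{L}(G')\subseteq\mathscr{L}(D-G'+(\omega))$ and therefore
$$C_\mathscr{L}(D,G')\subseteq C_\mathscr{L}(D,D-G'+(\omega))=C_\Omega(D,G')=C_\mathscr{L}(D,G')^{\perp},$$
which is the claimed self-orthogonality. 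For the self-dual addendum, strengthening~(1) to $(\eta)=2G-D$ turns the inequality into the equality $(\omega)=2G'-D$, so $D-G'+(\omega)=G'$ and the inclusion above collapses to $C_\mathscr{L}(D,G')=C_\mathscr{L}(D,G')^{\perp}$.

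I do not expect a genuine obstacle, since the argument is a direct transcription of the Hermitian case; the only points demanding care are the verification that conditions~(1) and~(2) jointly pin down $v_{P_i}(\eta)=-1$ (so that Theorem~\ref{D} applies), the correct invocation of the residue scaling formula under the unit $u$, and checking that the self-dual case genuinely produces an equality of divisors rather than a mere inequality.
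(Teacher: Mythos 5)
Your proposal is correct and follows essentially the same route as the paper: write the residues as $hb_i^2$ for a fixed representative $h$ of their common square class, use the approximation theorem to find a unit $u$ with $u(P_i)=b_i$, twist to $G'=G-(u)$, and apply Theorem~\ref{D} to the differential $h^{-1}u^{-2}\eta$. If anything, your write-up is tighter than the paper's, which handles the square case by citing the previous lemma and carries some copy-paste artifacts from the Hermitian proof (e.g.\ $(q+1)(u)$ where $2(u)$ is meant, and stray references to $f^q$ and $\mathbb{F}_{q^2}$), whereas you treat both square classes uniformly via $h$ and explicitly verify the hypotheses $v_{P_i}(\omega)=-1$ and $\mathrm{res}_{P_i}(\omega)=1$ needed to invoke Theorem~\ref{D}.
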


\begin{proof}

    The case where $\mathrm{res}_{P_i}\eta$ are all squares corresponds exactly to the previous lemma. Thus, we only need to consider the case where $\mathrm{res}_{P_i}\eta$ are all non-squares.

    Write $\mathrm{res}_{P_i}\eta = h b_i^2$, for some $b_i \in \mathbb{F}_q^*$ and $h$ is a primitive element. By the Weak Approximation Theorem, there exists $u \in \mathbb{F}_{q}(\chi)$ such that $v_{P_i}(u)=0$ and $u(P_i)=b_i$.
    Set $G' = G - (u)$. Then,
    $$
    2G' - D = 2G - D - 2(u) \leq (u^{-2}\eta).
    $$

    Since $\mathrm{res}_{P_i}(h^{-1}u^{-2}\eta) = 1$ for $1 \leq i \leq n$, we obtain
    $$
    C_\mathscr{L}(D,G') \subseteq C_\mathscr{L}(D,G')^{\perp} = C_\mathscr{L}(D,\, D - G' + (h^{-1}u^{-2}\eta)),
    $$
    by Theorem \ref{D}.
    This completes the proof.
\end{proof}

\section{Hermitian self-orthogonal codes}
In this section, we construct several Hermitian self-orthogonal codes and quantum codes by applying the new criterion introduced in Section 3. Throughout, for an irreducible curve $\chi$, we consistently treat the function field $\mathbb{F}_{q^2}(\chi)$ as an algebraic extension of $\mathbb{F}_{q^2}(x)$.

\subsection{Codes from $y^q+y=x^m$}

\qquad For $ m \mid q + 1 $, let $ \mathbb{F}_{q^2}(\chi) $ denote the function field of $ \chi $ over $ \mathbb{F}_{q^2} $, where $ \chi $ is defined by

$$\chi : y^q + y = x^m.$$

The genus of $ \chi $ is given by $ g = \frac{1}{2}(m-1)(q-1) $. A special case of this curve, where $ q $ is an odd power of 2 and $ m = 3 $, has been studied in \cite{J}, where it is used to construct quantum codes. Here we will investigate a general case.

Let $ n = q(m(q-1) + 1) $. Let $ \{ P_1, P_2, \dots, P_{m(q-1)} \} $ be the set of rational places of $ \mathbb{F}_{q^2}(x) $ corresponding to the $ m(q-1) $-th roots of unity in $ \mathbb{F}_{q^2} $. Let $ P_0 $ be the rational place $(x)$, and let $ P_\infty $ be the infinite place of $ \mathbb{F}_{q^2}(x) $.

In the extension $ \mathbb{F}_{q^2}(\chi)/\mathbb{F}_{q^2}(x) $, the finite places $ P_i $ split completely, and $ P_\infty $ is totally ramified. Let $ Q_{i,j} $ be the places of $ \mathbb{F}_{q^2}(\chi) $ that lie over $ P_i $ for $ 1 \leq i \leq m(q-1), 1\leq j\leq q$, and let $ Q_\infty $ be the infinite place lying over $ P_\infty $.

This means the number of rational places of $ \mathbb{F}_{q^2}(\chi)/\mathbb{F}_{q^2} $ is at least $ N = mq(q-1) + 1 = 2gq + q^2 + 1 $, which attains the upper Hasse-Weil bound. Therefore, $ \chi $ is a maximal curve.

Now, we can construct Hermitian self-orthogonal codes and quantum codes with good parameters.

\begin{theorem}\label{coro}

    Using the above notation, let $D=\sum\limits_{\scriptstyle1 \le i \le m(q - 1)\atop\scriptstyle1 \le j \le q} {{Q_{i,j}}}$. For $mq-m-q\leq r\leq m(q-1)-1$,
    \begin{itemize}
    \item[{\rm(1).}] There exists a divisor $G^{\prime}$ equivalent to $rQ_\infty$ such that $C_\mathscr{L}(D,G^{\prime})$ is Hermitian self-orthogonal with parameters $[mq^2-mq+q,k_0,d_0]_{q^2}$ with
    \[k_0=r-\frac{1}{2}(m-1)(q-1)+1, \quad d_0 \geq n-r.\]
    \item[{\rm(2).}] There exist  $q$-ary $[[mq^2-mq+q,k_1, d_1]]_q$ quantum codes with
      \[k_1=mq^2-m-2r-1,\quad d_1\geq r-mq+m+q+1.\]
    \end{itemize}
\end{theorem}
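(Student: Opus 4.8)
The plan is to produce, for part (1), an explicit differential $\eta$ on $\chi$ to which the Hermitian criterion of Lemma \ref{ML} applies with $G=rQ_\infty$, and then to read off the parameters from Theorem \ref{AG} and the Riemann--Roch Theorem; part (2) will follow from the Hermitian Construction together with a weight-preservation argument identifying the minimum distance of the Hermitian dual with that of the Euclidean dual.

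First I would fix the candidate differential $\eta=\frac{dx}{x(x^{m(q-1)}-1)}$, pulled back from $\mathbb{F}_{q^2}(x)$ to $\mathbb{F}_{q^2}(\chi)$. The evaluation places making up $D$ are the $q$ places above each root-of-unity place $P_i$ together with the $q$ places above $P_0=(x)$, all of which split completely in the extension: the fibre equations $z^q+z=\alpha_i^m$ and $z^q+z=0$ are separable with exactly $q$ roots in $\mathbb{F}_{q^2}$, using that $\alpha_i^m$ is a $(q-1)$-th root of unity and hence lies in $\mathbb{F}_q^*$. Because these places are unramified with trivial residue extension, the residue of $\eta$ at each of them equals the residue of $\frac{dx}{x(x^{m(q-1)}-1)}$ at the place below it. Using Lemma \ref{M} with modulus $m(q-1)$ I would compute this downstairs residue to be $\frac{1}{m(q-1)}$ at every $P_i$ and $-1$ at $P_0$. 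Since $p\nmid m(q-1)$, both values lie in $\mathbb{F}_p^*\subseteq\mathbb{F}_q^*$, hence in the single coset $\mathbb{F}_q^*$ of $\mathbb{F}_q^*$ inside $\mathbb{F}_{q^2}^*$; this verifies hypothesis (2) of Lemma \ref{ML}.

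The heart of the argument is computing $(\eta)$ and checking hypothesis (1). On $\mathbb{P}^1$ one has $\mathrm{div}\!\left(\frac{dx}{x(x^{m(q-1)}-1)}\right)=(m(q-1)-1)P_\infty-P_0-\sum_{i=1}^{m(q-1)}P_i$. The only ramified place is $Q_\infty$ (totally ramified, $e=q$), and all finite places split, so the pullback formula $\mathrm{div}(\pi^*\omega)=\pi^*\mathrm{div}(\omega)+\mathfrak d$ determines $(\eta)$ once the different $\mathfrak d$ is known. Rather than compute the wild Artin--Schreier different directly, I would back it out from Riemann--Hurwitz and the stated genus, obtaining $\deg\mathfrak d=2g-2+2q=(m+1)(q-1)$, so $\mathfrak d=(m+1)(q-1)Q_\infty$. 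Combining this with $\pi^*(P_0)+\sum_i\pi^*(P_i)=D$ yields $(\eta)=(mq^2-m-1)Q_\infty-D$. Hypothesis (1), namely $(\eta)\geq (q+1)rQ_\infty-D$, then reduces to $(q+1)r\leq mq^2-m-1$, i.e.\ $r\leq m(q-1)-1$, exactly the upper end of the stated range. This determination of $\mathfrak d$ (equivalently, the pole order of $\eta$ at $Q_\infty$) is where I expect the only real subtlety, owing to the wild ramification; anchoring it to the given genus sidesteps a direct conductor computation.

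With Lemma \ref{ML} supplying a divisor $G'\equiv rQ_\infty$ making $C_\mathscr{L}(D,G')$ Hermitian self-orthogonal, the parameters in (1) follow from Theorem \ref{AG}: for $r$ in the stated range we have $2g-2=(m-1)(q-1)-2<r$ and $\deg(G'-D)=r-n<0$, so $k_0=l(G')=r+1-g=r-\frac12(m-1)(q-1)+1$ by Riemann--Roch, while $d_0\geq n-r$; the lower bound $r\geq mq-m-q$ is precisely what guarantees $r>2g-2$. For part (2) I would feed this code into the Hermitian Construction, giving length $n=mq^2-mq+q$ and dimension $n-2k_0=mq^2-m-2r-1$. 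For the designed distance I would observe that the Hermitian dual is the coordinatewise Frobenius image of the Euclidean dual $C^\perp=C_\Omega(D,G')$, so the two have equal minimum distance; Theorem \ref{AG} then bounds it below by $\deg G'-(2g-2)=r-mq+m+q+1$, which is the claimed $d_1$.
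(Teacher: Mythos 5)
Your proposal is correct and follows essentially the same route as the paper: the same differential $\eta=\frac{\pm\,dx}{x(x^{m(q-1)}-1)}$ (the sign is immaterial), residues computed via Lemma \ref{M} landing in $\mathbb{F}_q^*$, the divisor identity $(\eta)=(n+2g-2)Q_\infty-D$ giving condition (1) of Lemma \ref{ML} exactly when $r\leq m(q-1)-1$, Riemann--Roch for $k_0$, and the Hermitian Construction for part (2). The only differences are that you make explicit what the paper leaves implicit --- deriving $(dx)=(2g-2)Q_\infty$ from Riemann--Hurwitz and the different at the wildly ramified $Q_\infty$, and justifying $d_1$ via the Frobenius identification of $C^{\perp_H}$ with $C_\Omega(D,G')$ --- both of which are correct fillings-in of detail rather than a different method.
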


\begin{proof}

\begin{itemize}

  \item[(1).] Consider $\eta=\frac{-dx}{x(x^{m(q-1)}-1)}$. Then it is easy to see $v_{Q_{i,j}}(\eta)=-1$ for all $i,j$. According to Lemma \ref{M}, $\mathrm{res}_{Q_{i,j}}(\eta)=\frac{-1}{m(q-1)}$ for all $i,j$. They are both elements in $\F_q^*$. On the other hand,
  $$
  \begin{aligned}
  \nonumber
      (\eta) &= q(m(q-1)+1)Q_\infty-D+(dx)           \\
             &= (n+2g-2)Q_\infty-D.
  \end{aligned}
  $$
  The condition $r\leq m(q-1)-1$ implies $(q+1)r\leq n+2g-2$. Hence by Lemma \ref{ML}, there exists a divisor $G^{\prime}$ equivalent to $rQ_\infty$ such that $C_\mathscr{L}(D,G^{\prime})$ is Hermitian self-orthogonal with desired parameters.
  \item[(2).] Applying Hermitian construction to $C_\mathscr{L}(D,G^{\prime})$ in (1) produces the quantum code.
\end{itemize}
\end{proof}

\begin{example}

    Let $q=27, m=7$ and  $r=181$.  By Theorem \ref{coro} we obtain a $[4941,104,\geq 4760]_{729}$ code which is Hermitian self-orthogonal. Using Hermitian construction, we obtain quantum code with parameters $[[4941,4733,\geq 27]]_{27}$. $\hfill\blacksquare$
\end{example}

The following result for $q=2^l$ with $l$ odd and $m=3$ has been investigated in \cite{J}.
\begin{corollary}\label{coro_m}

    For odd $m$ and $q$ being a power of $2$, there exists a $q$-ary $[[mq^2-mq+q,k_1, d_1]]_q$ quantum code with $k_1=mq^2-m-2r-1,\quad d_1\geq r-mq+m+q+1$. Especially, let $m=3$ and $q$ is an odd power of $2$, we have quantum code with parameters $[[3q^2-2q,3q^2-4-2r,\geq r-2q+4]]_q$.
\end{corollary}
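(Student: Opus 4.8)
The plan is to read this corollary as a specialization of Theorem \ref{coro} to characteristic $2$, since the entire subsection is built on the curve $\chi: y^q + y = x^m$ under the standing hypothesis $m \mid q+1$. The first step is to observe that this hypothesis is not only compatible with but actually forces the ``odd $m$'' condition once $q$ is a power of $2$: here $q$ is even, so $q+1$ is odd, and any divisor $m$ of $q+1$ is therefore odd. Equivalently, the genus formula $g = \frac{1}{2}(m-1)(q-1)$ that underlies the whole construction is valid exactly when $\gcd(m,p)=1$, which in characteristic $2$ is precisely the requirement that $m$ be odd. Consequently the completely split finite places, the totally ramified place $Q_\infty$, the maximality of $\chi$, and the residue computation with $\eta = \frac{-dx}{x(x^{m(q-1)}-1)}$ all carry over verbatim; no new geometry is needed.

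Next I would invoke Theorem \ref{coro}(2) directly, over the same admissible range $mq - m - q \leq r \leq m(q-1) - 1$. This immediately produces the $q$-ary quantum code $[[mq^2 - mq + q,\, k_1,\, d_1]]_q$ with $k_1 = mq^2 - m - 2r - 1$ and $d_1 \geq r - mq + m + q + 1$, which is exactly the first assertion. The content is simply the recognition that Theorem \ref{coro} specializes to this regime, with ``odd $m$'' recording that $q$ being a power of $2$ is the natural setting (and the one studied in \cite{J}).

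For the displayed case $m = 3$ I would substitute into the parameter formulas: the length becomes $3q^2 - 3q + q = 3q^2 - 2q$, the dimension becomes $k_1 = 3q^2 - 3 - 2r - 1 = 3q^2 - 4 - 2r$, and the distance bound becomes $d_1 \geq r - 3q + 3 + q + 1 = r - 2q + 4$, yielding $[[3q^2 - 2q,\, 3q^2 - 4 - 2r,\, \geq r - 2q + 4]]_q$. The one point that must be checked is that the stated hypothesis ``$q$ is an odd power of $2$'' is the correct translation of $3 \mid q+1$ in this case. This is an elementary congruence: writing $q = 2^k$ and using $2 \equiv -1 \pmod 3$, we get $q + 1 \equiv (-1)^k + 1 \pmod 3$, which vanishes exactly when $k$ is odd. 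Hence $3 \mid q+1$ if and only if $q$ is an odd power of $2$, matching \cite{J}.

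The main thing to verify, rather than a genuine obstacle, is the bookkeeping that the specialized range stays nonempty and gives a meaningful distance: for $m = 3$ the range reads $2q - 3 \leq r \leq 3q - 4$, and the lower endpoint $r = 2q - 3$ already forces $d_1 \geq 1$, so every admissible $r$ produces a legitimate code. Beyond this and the single modular identity above, the corollary carries no independent proof burden; it is a direct reading of Theorem \ref{coro} in the characteristic-$2$ setting.
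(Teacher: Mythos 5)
Your proposal is correct and follows essentially the same route as the paper: both deduce the corollary by specializing Theorem \ref{coro} under the standing hypothesis $m \mid q+1$, and both reduce the ``$q$ an odd power of $2$'' claim to the congruence $3 \mid 2^l+1 \iff l$ odd. Your write-up is in fact slightly more careful than the paper's one-line proof, since you note explicitly that for even $q$ the divisor condition $m \mid q+1$ \emph{forces} $m$ odd (rather than the converse), and you verify the admissible range of $r$ is nonempty.
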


\begin{proof}

  For $q=p^l$ with $p=2$, $l$ odd,  note that $3\mid 2^l+1$ and $2\mid m-1$. Then the conclusion follows from  Theorem \ref{coro}.
\end{proof}

In comparison with the result of \cite{J}, the results in Theorem \ref{coro} extend the framework to more general algebraic curves, thereby removing the restriction that $q$ must be even. Consequently, Theorem \ref{coro} can be regarded as a natural generalization of Corollary \ref{coro_m}.

\begin{example}

For $q=8, m=3, r=20$, the corresponding code $C_\mathscr{L}(D,G^{\prime})$ is Hermitian self-orthogonal with parameters $[176,14,\geq 156]_{64}$. By Hermitian construction we obtain a quantum code with parameters $[[176,148,\geq 8]]_8$. $\hfill\blacksquare$
\end{example}

\subsection{Codes from Hermitian curves}

\qquad Denote by $\mathbb{F}_{q^2}(\chi)$ the Hermitian function field of $\chi$ over $\mathbb{F}_{q^2}$ with
$$\chi:y^q+y=x^{q+1}.$$
The genus of $\chi$ is given by $g = \frac{q(q-1)}{2}$.

For each $\alpha\in \mathbb{F}_{q^2}$, the places of $\mathbb{F}_{q^2}(\chi)$ lying over $(x-\alpha)$ split completely. Moreover, the simple pole $P_\infty$ of $x$ is totally ramified in $\mathbb{F}_{q^2}(\chi)$, and we denote the corresponding place in $\mathbb{F}_{q^2}(\chi)$ by $Q_\infty$. Hence $P_\infty=Q_\infty\cap \mathbb{F}_{q^2}(\chi)$.

\begin{theorem}\label{T5}

    There exists a Hermitian self-orthogonal code with parameters $[q^3-q^2,\frac{q(q-1)}{2},\geq q^3-2q^2+q-1]_{q^2}$.
\end{theorem}

\begin{proof}

    Denote by $h$ a primitive element in $\mathbb{F}_{q^2}$.
    Let $f(x)=\prod\limits_{\alpha\in\mathbb{F}_{q^2}\backslash \mathbb{F}_q} {( x - \alpha ) }$. Then, for $1\leq k\leq q$ and any $\alpha\in\mathbb{F}_q^*$, it follows that
    $$
    \left. { \frac{f(x)}{x-h^k\alpha}} \right|_{x=h^k\alpha}=\left. { \frac{x^{q^2}-x}{x-h^k\alpha}} \right|_{x=h^k\alpha}\cdot \left. { \frac{1}{\prod\limits_{\alpha\in\mathbb{F}_q}(x-\alpha)}} \right|_{x=h^k\alpha}=\left. { \frac{-1}{x^q-x}} \right|_{x=h^k\alpha}=\frac{-1}{(h^{kq}-h^k)\alpha}.
    $$
    In the above equation, the first equality holds since $1\leq k\leq q$ and any $\alpha\in\mathbb{F}_q^*$, we have $h^k\alpha\notin \mathbb{F}_q$.

    For $1\leq k_1\neq k_2\leq q$, observe that
    $$
    \left(\frac{h^{k_1q}-h^{k_1}}{h^{k_2q}-h^{k_2}}\right)^q=\frac{h^{k_1q^2}-h^{qk_1}}{h^{k_2q^2}-h^{qk_2}}=\frac{h^{k_1}-h^{qk_1}}{h^{k_2}-h^{qk_2}}=\frac{h^{k_1q}-h^{k_1}}{h^{k_2q}-h^{k_2}}.
    $$
    As a result, $\frac{1}{h^{k_1q}-h^{k_1}}$ and $\frac{1}{h^{k_2q}-h^{k_2}}$ lie in the same coset of $\mathbb{F}_q^*$.

    The places $(x-d)$ for $d\in \mathbb{F}_{q^2}\backslash \mathbb{F}_q$ split completely in $\mathbb{F}_{q^2}(\chi)$, and we denote the sum of these places in $\mathbb{F}_{q^2}(\chi)$ by $D$. Let $\eta=\frac{dx}{f(x)}$. Then, we obtain that ${\rm res}_P(\eta)$ lies in the same coset of $\mathbb{F}_q^*$ for all $P$ lying over $(x-d)$.

    Let $G=rQ_\infty$. For $r\leq q^2-q-1$, it holds that $(\eta)\geq (q+1)G-D$. Thus, by Lemma \ref{ML}, there exists a divisor $G^{\prime}$ equivalent to $G$ such that $C_\mathscr{L}(D,G^{\prime})$ is Hermitian self-orthogonal.

    Note that if $r= q^2-q-1$ , we obtain $r\geq 2g-1$, which implies that
    $${\rm dim}(C_\mathscr{L}(D,G^{\prime}))=l(rQ_\infty)=r+1-g.$$
    The second equality holds since $\deg(G^\prime)=\deg(rQ_\infty)\geq 2g-1$. This leads to the conclusion.
\end{proof}

\section{Euclidean self-orthogonal codes}
In this section, we construct several Euclidean self-orthogonal and Euclidean self-dual codes by applying the new criterion in Section 3. Throughout this section, for an irreducible curve $\chi$, we consistently treat $\mathbb{F}_{q^2}(\chi)/\mathbb{F}_{q^2}$ as an algebraic extension of $\mathbb{F}_{q^2}(x)/\mathbb{F}_{q^2}$.

As is well known, Euclidean self-dual codes must have even length. Therefore, a code of odd length cannot be Euclidean self-dual. However, there exists a class of codes called almost self-dual codes, which have odd length:

\begin{definition}
A code with parameters $[2k+1, k, d]_q$ is called an almost self-dual code if it is Euclidean self-orthogonal.
\end{definition}

In fact, such codes achieve the maximal possible dimension under the condition of Euclidean self-orthogonality.

\subsection{From GRS codes and EGRS codes}

Recall the definitions of GRS codes and EGRS codes:

For $1\leq n\leq q$, we choose $A=\{\alpha_1,\alpha_2,\ldots,\alpha_n\}$ and $V=(v_1,v_2,\ldots,v_n)$, where $\alpha_i\in\mathbb{F}_{q}$ are all distinct and $v_i\in\mathbb{F}_q^*$ for $1\leq i\leq n.$ Then
$$
{\rm GRS}_k(A,V,q)=\{(v_1f(\alpha_1),\ldots,v_nf(\alpha_n)):f(x)\in\mathbb{F}_q[x],\deg(f(x))\leq k-1\},
$$
for $1\leq k\leq n$.
$$
{\rm EGRS}_k(A,V,q)=\{(v_1f(\alpha_1),\ldots,v_nf(\alpha_n),f_{k-1}):f(x)\in\mathbb{F}_q[x],\deg(f(x))\leq k-1\},
$$
where $1\leq k\leq n$ and $f_{k-1}$ is the coefficient of $x^{k-1}$ in $f(x)$.

For ${\rm GRS}_{\frac{n}{2}}(A, V)$ (or ${\rm EGRS}_{\frac{n+1}{2}}(A, V, q)$), let $A=\{\alpha_1,\alpha_2,\ldots,\alpha_n\}$ and $V=(v_1,v_2,\ldots,v_n)$, and define
$$\Delta_A(\alpha_i)=\prod_{1\leq j\leq n,j\neq i}(\alpha_i-\alpha_j).$$
Then the following result provides a criterion for determining whether a GRS code or an EGRS code is self-orthogonal.

\begin{lemma}\label{DG}

    {\rm(\cite{JX})} Suppose that $n$ is even. There exists $V=(v_1, v_2, \cdots, v_n)\in(\mathbb{F}_q^*)^n$ such that ${\rm GRS}_{\frac n2}(A, V)$ is self-dual if and only if all $\Delta_A(\alpha)$ are square for any $\alpha\in A$. $\hfill\blacksquare$
\end{lemma}

\begin{lemma}\label{DEG}

    {\rm(\cite{HY})} Suppose that $n$ is even. There exists $V=(v_1, v_2, \cdots,v_n)\in(\mathbb{F}_q^*)^n$ such that ${\rm EGRS}_{\frac{n+1}{2}}(A, V, q)$ is self-dual if and only if all $-\Delta_A(\alpha)$ are square for any $\alpha\in A$. $\hfill\blacksquare$
\end{lemma}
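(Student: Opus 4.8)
The plan is to reduce the self-dual property to a single system of weighted power-sum equations in the squared multipliers $v_i^2$, solve that system explicitly, and then read off the existence criterion. Since the extended code ${\rm EGRS}_{\frac{n+1}{2}}(A,V,q)$ has length $n+1$ and dimension equal to half of it, self-duality is equivalent to self-orthogonality. Expanding the Euclidean inner product of two codewords coming from polynomials $f,g$ of degree $\leq k-1$ (with $k=\frac{n+1}{2}$), the self-orthogonality condition reads
$$\sum_{i=1}^n v_i^2 f(\alpha_i)g(\alpha_i) + f_{k-1}g_{k-1} = 0$$
for all such $f,g$, where $f_{k-1},g_{k-1}$ are the coefficients of $x^{k-1}$.

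First I would test this on the monomial basis $f=x^a$, $g=x^b$ with $0\leq a,b\leq k-1$. Writing $u_i=v_i^2$ and $S_j=\sum_{i=1}^n u_i\alpha_i^j$, the condition becomes $S_{a+b}+\delta_{a,k-1}\delta_{b,k-1}=0$. As $(a,b)$ ranges over all admissible pairs, the exponent $a+b$ ranges over $0,\ldots,2k-2=n-1$, and the extra term is nonzero only for the single pair $(k-1,k-1)$, i.e.\ only at $a+b=n-1$. Hence the whole family of conditions collapses to
$$S_j = 0 \quad (0\leq j\leq n-2), \qquad S_{n-1} = -1.$$

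This is a linear system in $(u_1,\ldots,u_n)$ whose coefficient matrix is the Vandermonde matrix $(\alpha_i^j)_{0\leq j\leq n-1,\,1\leq i\leq n}$; since the $\alpha_i$ are distinct it is invertible, so the system has a \emph{unique} solution. The classical identity $\sum_{i=1}^n \frac{\alpha_i^j}{\Delta_A(\alpha_i)}=\delta_{j,n-1}$ for $0\leq j\leq n-1$ (which I would verify by Lagrange-interpolating $x^j$ at the nodes $\alpha_i$ and extracting the leading coefficient, this being where $j\leq n-1$ is used) identifies that solution as $u_i=\frac{-1}{\Delta_A(\alpha_i)}$. Consequently ${\rm EGRS}_{\frac{n+1}{2}}(A,V,q)$ is self-dual if and only if $v_i^2=\frac{-1}{\Delta_A(\alpha_i)}$ for every $i$, and such $v_i\in\mathbb{F}_q^*$ exist precisely when $\frac{-1}{\Delta_A(\alpha_i)}$ — equivalently $-\Delta_A(\alpha_i)$ — is a nonzero square for every $\alpha_i\in A$.

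Both directions then follow immediately from uniqueness: sufficiency because the prescribed $v_i$ solve the system, and necessity because any self-dual EGRS code must satisfy the very same system and hence carry these multipliers. The step to get right is the bookkeeping of the extra coordinate $f_{k-1}$: it is exactly what shifts the single power sum $S_{n-1}$ from $0$ to $-1$, thereby fixing the constant and producing the sign in $-\Delta_A(\alpha_i)$. This is also the structural contrast with the GRS case of Lemma \ref{DG}, where the absence of this coordinate leaves a free scalar and so forces only that all $\Delta_A(\alpha_i)$ share the same quadratic character, rather than the sharper "all squares" condition here.
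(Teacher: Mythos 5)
Your proof is correct, and a preliminary remark is in order: the paper itself offers no argument for this lemma --- it is quoted from \cite{HY} with only a citation --- so the comparison is with the standard proof in that source. Your route (test self-orthogonality on the monomial basis, collapse the conditions to the power sums $S_j=\sum_i v_i^2\alpha_i^j$ with $S_j=0$ for $j\le n-2$ and $S_{n-1}=-1$, invoke invertibility of the Vandermonde system for uniqueness, and identify the unique solution as $v_i^2=-1/\Delta_A(\alpha_i)$ via the Lagrange-interpolation identity $\sum_i \alpha_i^j/\Delta_A(\alpha_i)=\delta_{j,n-1}$) is in substance the same computation that underlies Yan's proof, which is usually packaged instead through the description of the dual of an (E)GRS code as another (E)GRS code with multipliers proportional to $1/\Delta_A(\alpha_i)$; your version is self-contained and makes transparent exactly how the extension coordinate shifts $S_{n-1}$ from $0$ to $-1$, which is the structural difference from Lemma \ref{DG} that you correctly point out. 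The final reduction $-1/\Delta_A(\alpha_i)=(-\Delta_A(\alpha_i))/\Delta_A(\alpha_i)^2$ is also right, so ``$-1/\Delta_A(\alpha_i)$ a square'' is equivalent to ``$-\Delta_A(\alpha_i)$ a square'' (and the criterion is vacuously satisfied in characteristic $2$, where squaring is bijective).

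One point deserves explicit mention: the hypothesis ``$n$ is even'' in the statement as printed is inconsistent --- it makes $k=\frac{n+1}{2}$ non-integral and the code length $n+1$ odd, so no self-dual code of that length could exist. The hypothesis should read ``$n$ odd'' (the ``$n$ even'' is evidently carried over from the GRS case of Lemma \ref{DG}), and your argument implicitly assumes exactly this when it treats $k=\frac{n+1}{2}$ as an integer and asserts that dimension equals half the length; it would be worth stating that correction rather than leaving it tacit. With that caveat, both directions of your equivalence are sound, the necessity direction following from the uniqueness of the Vandermonde solution precisely as you say.
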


In \cite{XFL}, several methods for constructing long MDS self-dual codes were introduced. While all of these codes are MDS, their lengths are inherently limited due to the constraints of the MDS property. Here we generalize these methods to AG codes, enabling the construction of self-dual codes with significantly larger lengths. Moreover, even for some odd lengths $n$, we can construct almost self-dual AG codes with parameters $[n, \frac{n-1}{2}]$.

The following notation will be used in this subsection (Subsection 5.1):

\begin{itemize}
    \item $q = p^m$.
    \item $H$ is an $\mathbb{F}_{p^s}$-subspace of $\mathbb{F}_q$ with dimension $l$, where $s \mid 2m$.
    \item $A = \{a_1, a_2, \dots, a_{2t}\}$ and $B = \bigcup\limits_{a_i \in A}(a_i \beta + H) = \{b_1, b_2, \dots, b_{n_0}\}$, where $2t \leq p^s$ and $\beta\in\F_q\backslash H$.
    \item $n_0 = (2t + 1)p^{sl}$.
\end{itemize}

We are now ready to construct self-dual (or almost self-dual) AG codes from self-dual EGRS/GRS codes.

\begin{theorem}\label{T6}

    Suppose that there exists a smooth, projective, absolutely irreducible curve $\chi$ of genus $g$ such that the places $(x - b_i)$ split completely and $P_\infty$ is totally ramified over $\mathbb{F}_{q^2}(x)$ (for example, the Hermitian curve). Let $[\mathbb{F}_{q^2}(\chi):\mathbb{F}_{q^2}(x)] = \lambda$. Assume that
    $$
    2g - 1 \leq r \leq \left\lfloor \frac{n + d(Q_\infty \mid P_\infty)}{2} \right\rfloor,
    $$
    where $d(Q_\infty \mid P_\infty)$ denotes the different exponent of $Q_\infty$ over $P_\infty$ (this inequality is satisfied in the case of Hermitian curves), and that there exists a self-dual EGRS code with parameters $[2t+2,\ t+1,\ t+2]_{q^2}$. Then there exist (almost) self-dual codes with parameters $[n, k, \geq n-k-g+1]_{q^2}$, where
    $$
    n = \lambda n_0, \qquad k = \left\lfloor \tfrac{n}{2} \right\rfloor.
    $$
    In particular, these codes are self-dual when $q$ is odd and almost self-dual when $q$ is even.
\end{theorem}

\begin{proof}

    We choose $V \in (\mathbb{F}_{q^2}^*)^{2t+1}$ such that ${\rm EGRS}(A, V, q^2)$ is a self-dual code.

    According to Lemma \ref{DEG}, if $-\Delta_A(\alpha)$ are all squares for $\alpha\in A$, then we obtain
    $$\Delta_B(b_i)=-\Delta_H(0)\Delta_A(a_k)(f_H(\alpha))^{2t},$$
    where $f_H(x)=\prod\limits_{h \in H} {(x - h)}$. Consequently, $\Delta_B(b_i)$ are all squares.

    Next, consider the places of $\mathbb{F}_{q^2}(\chi)$ lying over $(x-b)$ with $b \in B$: they are all completely split. Let $D$ be the sum of these places. Now, for AG codes $C_{\mathscr{L}}(D,rQ_\infty)$, define $h(x)=\prod\limits_{b \in B} {(x - b)}$ and $\eta=\frac{dx}{h(x)}$. The residues ${\rm res}_{P_i}\eta$ are all squares, and it follows that
    $$(\eta)=-D+nQ_\infty + (dx).$$
    Observe that the complete splitting of the places $(x - b_i)$, the total ramification of $P_\infty$, and the condition $r \leq \left\lfloor \tfrac{n + d(Q_\infty \mid P_\infty)}{2} \right\rfloor$ imply that $(\eta) \geq 2G - D$. Therefore, by utilizing Lemma \ref{ML2}, there exists a divisor $G^{\prime}$ equivalent to $rQ_\infty$ such that $C_\mathscr{L}(D,G^{\prime})$ is Euclidean self-orthogonal. Moreover, the condition $\lambda n_0\geq 2g-1$ ensures that the dimension of $C_\mathscr{L}(D,G^{\prime})$ is equal to $\left\lfloor\frac{n}{2}\right\rfloor$ by the Riemann-Roch Theorem. Moreover, if $q$ is odd, the equality $2rQ_\infty = D + (\eta)$ holds, which, by Lemma \ref{ML2}, implies the existence of Euclidean self-dual codes with parameters
    $$[\lambda(2t+1)p^{sl},\frac{\lambda(2t+1)p^{sl}}{2},\geq \lambda(2t+1)p^{sl}-r]_{q^2}.$$
\end{proof}

\begin{remark}
The above result provides a generic method to construct (almost) self-dual AG codes from self-dual GRS codes. By this way, many classes of  (almost) self-dual AG codes can be obtained from known self-dual GRS codes.
\end{remark}

Similarly, the following can be shown
\begin{theorem}\label{T7}

    Suppose that there exists a smooth, projective, absolutely irreducible curve $\chi$ of genus $g$ such that the places $(x - b_i)$ split completely and $P_\infty$ is totally ramified over $\mathbb{F}_{q^2}(x)$ (for example, the Hermitian curve). Let $[\mathbb{F}_{q^2}(\chi):\mathbb{F}_{q^2}(x)] = \lambda$. Assume that
    $$
    2g - 1 \leq r \leq \left\lfloor \frac{2\lambda tp^{sl} + d(Q_\infty \mid P_\infty)}{2} \right\rfloor,
    $$
    where $d(Q_\infty \mid P_\infty)$ denotes the different exponent of $Q_\infty$ over $P_\infty$ (this inequality is satisfied in the case of Hermitian curves), and that there exists a self-dual ${\rm GRS}$ code of length $2tp^{sl}$, then we obtain Euclidean self-orthogonal codes with parameters
    $$[2\lambda tp^{sl},r+1-g, \geq 2\lambda tp^{sl}-r]_{q^2}.$$
    for $2g-1\leq r\leq g-1+\lambda tp^{sl}$. In particular, there exist Euclidean self-dual codes with parameters $[2\lambda tp^{sl},\lambda tp^{sl},\geq \lambda tp^{sl}+1-g]_{q^2}$. $\hfill\blacksquare$
\end{theorem}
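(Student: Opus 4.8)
The plan is to reproduce the structure of the proof of Theorem~\ref{T6} verbatim, replacing the EGRS self-duality criterion (Lemma~\ref{DEG}) by the GRS criterion (Lemma~\ref{DG}). First I would invoke the hypothesis that a self-dual $\mathrm{GRS}_t(A,V)$ exists; by Lemma~\ref{DG} this is equivalent to all the quantities $\Delta_A(a_k)$, $a_k\in A$, lying in a single coset of the squares of $\mathbb{F}_{q^2}^*$ (all square or all non-square). Since now $|A|=2t$, the code $\mathrm{GRS}_t(A,V)$ has length $2t$ and dimension $t$, and the point set $B=\bigcup_{a_k\in A}(a_k\alpha+H)$ has cardinality $n_0=2tp^{sl}$, so the resulting AG code will have length $n=\lambda n_0=2\lambda tp^{sl}$, as required.

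The technical heart is the closed-form evaluation of $\Delta_B(b_i)$ for $b_i=a_k\alpha+h\in B$. I would split the product $\prod_{b_j\neq b_i}(b_i-b_j)$ into the part indexed by the same coset $a_k\alpha+H$ and the part indexed by the remaining cosets. The same-coset factor collapses to $\Delta_H(0)$ by translation invariance of $\Delta_H$ over the additive group $H$, while the cross-coset factor, using that $A\subseteq\mathbb{F}_{p^s}$ and that $f_H(x)=\prod_{h\in H}(x-h)$ is an $\mathbb{F}_{p^s}$-linearized polynomial (so $f_H((a_k-a_{k'})\alpha)=(a_k-a_{k'})f_H(\alpha)$), evaluates to $\Delta_A(a_k)\,f_H(\alpha)^{2t-1}$. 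This yields
$$\Delta_B(b_i)=\Delta_H(0)\,\Delta_A(a_k)\,f_H(\alpha)^{2t-1}.$$
Because the prefactor $\Delta_H(0)f_H(\alpha)^{2t-1}$ is a fixed nonzero constant, all the $\Delta_B(b_i)$ inherit the common quadratic character of the $\Delta_A(a_k)$.

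With this in hand I would set $\eta=dx/h(x)$ with $h(x)=\prod_{b\in B}(x-b)$ and read off the residues: at each place $Q$ of $\mathbb{F}_{q^2}(\chi)$ lying over the completely split $(x-b_i)$ one has $\mathrm{res}_Q(\eta)=1/h'(b_i)=1/\Delta_B(b_i)$, since the extension is unramified there with trivial residue-field extension. Hence all residues share one quadratic character, verifying hypothesis (2) of Lemma~\ref{ML2}. A divisor computation as in Theorem~\ref{T6} gives $(\eta)=-D+(n+2g-2)Q_\infty$, so the condition $(\eta)\geq 2(rQ_\infty)-D$ holds exactly when $r\leq \tfrac{n}{2}+g-1=g-1+\lambda tp^{sl}$. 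Lemma~\ref{ML2} then produces a divisor $G'\sim rQ_\infty$ with $C_\mathscr{L}(D,G')$ Euclidean self-orthogonal; for $r\geq 2g-1$ the Riemann-Roch Theorem gives $l(rQ_\infty)=r+1-g$ and $l(rQ_\infty-D)=0$, hence dimension $r+1-g$ and $d\geq n-r=2\lambda tp^{sl}-r$. Taking $r=g-1+\lambda tp^{sl}$ forces the equality $(\eta)=2(rQ_\infty)-D$, so the stronger clause of Lemma~\ref{ML2} yields the self-dual code $[2\lambda tp^{sl},\lambda tp^{sl},\geq \lambda tp^{sl}+1-g]_{q^2}$, the dimension being $n/2$ precisely because $\lambda n_0\geq 2g-1$.

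The one step demanding genuine care is the evaluation of $\Delta_B(b_i)$, and in particular the exponent on $f_H(\alpha)$: because the GRS case uses $2t$ evaluation points rather than the $2t+1$ of the EGRS case, the cross-coset product carries $2t-1$ factors of $f_H(\alpha)$ (against $2t$ in Theorem~\ref{T6}), and the sign twist forced by the EGRS criterion is absent. I must also confirm $A\subseteq\mathbb{F}_{p^s}$ so that the scalars $a_k-a_{k'}$ can be pulled out of the linearized polynomial, which is exactly what the constraint $2t\leq p^s$ guarantees. Everything else --- residue preservation at the split places, the divisor of $\eta$, and the Riemann-Roch bookkeeping --- is routine once the Hermitian-type model (with $P_\infty$ totally ramified and $Q_\infty$ rational) is fixed.
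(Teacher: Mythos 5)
Your proposal is correct and follows exactly the route the paper intends: the paper gives no separate proof of Theorem~\ref{T7}, merely asserting it follows ``similarly'' to Theorem~\ref{T6}, and your adaptation --- replacing Lemma~\ref{DEG} by Lemma~\ref{DG}, computing $\Delta_B(b_i)=\Delta_H(0)\,\Delta_A(a_k)\,f_H(\alpha)^{2t-1}$ with the exponent $2t-1$ and no sign twist, then invoking Lemma~\ref{ML2} and Riemann--Roch --- is precisely that argument carried out in detail. Your explicit flagging of the needed hypotheses ($A\subseteq\mathbb{F}_{p^s}$, and the Hermitian-type model with $P_\infty$ totally ramified so that $(\eta)=-D+(n+2g-2)Q_\infty$) in fact makes the argument more careful than the paper's own treatment.
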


\begin{proof}

    Similar to the previous proof, by Lemma \ref{DG}, if $\Delta_A(\alpha)$ is a square for every $\alpha \in A$, then
    $$
    \Delta_B(b_i) = -\Delta_H(0)\,\Delta_A(a_k)\,(f_H(\alpha))^{2t},
    $$
    where
    $$
    f_H(x) = \prod_{h \in H} (x - h).
    $$
    Hence, $\Delta_B(b_i)$ are either all squares or all non-squares.

    Now consider the places of $\mathbb{F}_{q^2}(\chi)$ lying over $(x - b_i)$. Let $D$ be the sum of these places, and set $G = rQ_\infty$. The condition $r \leq \left\lfloor \frac{2\lambda t p^{sl} + d(Q_\infty \mid P_\infty)}{2} \right\rfloor$ implies that $(\eta) \geq 2G - D$. By Lemma~\ref{ML2}, there then exists a divisor $G^\prime$ linearly equivalent to $G$ such that $C_\mathscr{L}(D, G^\prime)$ is Euclidean self-orthogonal. In particular, self-dual codes exist in this construction.
\end{proof}

\begin{example}
    If self-dual EGRS codes with parameters $[2t+2, t+1, t+2]_{q^2}$ exist, then several families of Euclidean self-dual and almost self-dual codes can be constructed from the Hermitian curve, whose genus is $g = \tfrac{q(q-1)}{2}$, as follows:
    \begin{itemize}
        \item [\rm(1).] There exist Euclidean self-orthogonal codes with parameters
        $$\left[q(2t+1)p^{sl},r+1-\frac{q(q-1)}{2},\geq q(2t+1)p^{sl}-r\right]_{q^2}$$
        for $q(q-1)-1\leq r\leq \frac{q(q-1)}{2}-1+\lfloor \frac{q(2t+1)p^{sl}}{2}\rfloor $.
        \item [\rm(2).] There exist Euclidean self-dual codes with parameters
        $$\left[q(2t+1)p^{sl},\frac{q(2t+1)p^{sl}}{2},\geq \frac{q(2t+1)p^{sl}}{2}-\frac{q(q-1)}{2}+1\right]_{q^2}$$
        for even $q$.
        \item [\rm(3).] In particular, there exist Euclidean self-orthogonal codes with parameters
        $$\left[q(2t+1)p^{sl},\frac{q(2t+1)p^{sl}-1}{2},\geq \frac{q(2t+1)p^{sl}+1}{2}+1-\frac{q(q-1)}{2}\right]_{q^2}$$
        for odd $q$. This is also a Euclidean self-orthogonal code with the largest possible dimension for codes of odd length.
    \end{itemize}
     $\hfill\blacksquare$
\end{example}

In contrast to \cite{XFL}, where MDS self-dual codes of length $(2t+1)p^{sl}+1$ were obtained, the self-dual and almost self-dual codes constructed here do not retain the MDS property. However, their length is $q$ times that of the codes in \cite{XFL}. Furthermore, \cite{Sok} proved the existence of self-dual codes of length $q(2t+1)p^{sl}$ for even $q$. We generalize this result to odd $q$ by constructing almost self-dual codes.

\subsection{Codes from $y^q+y=x^m$}

\qquad   For $m\mid q+1$, denote by $\mathbb{F}_{q^2}(\chi)$ the function field of $\chi$ over $\mathbb{F}_{q^2}$, where
$$\chi:y^q+y=x^m.$$
Recall the notation and conclusion in Section 4.1. This corresponds to a maximal curve with $g=\frac{1}{2}(m-1)(q-1)$.
Let $n=q(m(q-1)+1)$. Let $\{P_0,P_1,P_2,\cdots,P_{m(q-1)}\}$ be the set of all places of $\mathbb{F}_{q^2}(x)$ that split completely in $\mathbb{F}_{q^2}(\chi)$, where $P_0$ is the rational place $(x)$. Let $P_\infty$ be the infinite place of $\mathbb{F}_{q^2}(x)$, and let $Q_{\infty}$ be the infinite place lying over $P_{\infty}$. Let $D=\sum\limits_{i=0}^{m(q-1)}\sum\limits_{j=1}^q {{Q_{i,j}}}$ be the sum of all places lying over $\{P_0,P_1,P_2,\cdots,P_{m(q-1)}\}$.

Now we can construct Euclidean self-orthogonal and self-dual codes with good parameters.
\begin{theorem}\label{mc}

    \begin{itemize}
        \item[{\rm(1).}]For $mq-m-q\leq r\leq\lfloor\frac{m(q^2-1)-1}{2}\rfloor$, there is a Euclidean self-orthogonal code with parameters $[n,k_0,d_0]_{q^2}$ where
        $$n=mq^2-mq+q,\quad k_0\geq r-\frac{1}{2}(m-1)(q-1)+1, \quad d_0\geq n-r.$$
        \item[{\rm(2).}]In particular, there exists a Euclidean self-dual code with parameters
        $$\left[mq^2-mq+q,\frac{mq^2-mq+q}{2},\geq \frac{mq^2-mq+q}{2}-\frac{(m-1)(q-1)}{2}+1\right]_{q^2}$$
        when $q$ is even.
    \end{itemize}
\end{theorem}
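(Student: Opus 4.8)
The plan is to reuse the differential from the proof of Theorem \ref{coro}, namely $\eta=\frac{-dx}{x(x^{m(q-1)}-1)}$, but to feed it into the Euclidean criterion of Lemma \ref{ML2} instead of the Hermitian Lemma \ref{ML}. First I would record, exactly as in Theorem \ref{coro}, that $\eta$ has a simple pole at every place $Q_{i,j}$ lying over $P_0$ and over the $m(q-1)$-th roots of unity, so $v_{Q_{i,j}}(\eta)=-1$ throughout $\mathrm{supp}(D)$. The residues come in two types: at the places over $P_0$ one computes $\mathrm{res}(\eta)=1$, while Lemma \ref{M} gives $\mathrm{res}(\eta)=\frac{-1}{m(q-1)}$ at the places over each root of unity (these residues are inherited from $\mathbb{F}_{q^2}(x)$ because $P_0$ and the roots of unity split completely, hence are unramified). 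Both values lie in $\mathbb{F}_q^*$, using that $p\nmid m(q-1)$.

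The step that replaces the Hermitian coset argument, and which I expect to be the genuine crux, is verifying condition (2) of Lemma \ref{ML2}: that all residues are simultaneously squares (or simultaneously non-squares) in $\mathbb{F}_{q^2}^*$. The key observation is that every element of $\mathbb{F}_q^*$ is a square in $\mathbb{F}_{q^2}^*$. For $q$ odd this holds because $(q-1)\mid\frac{q^2-1}{2}$, so the cyclic subgroup $\mathbb{F}_q^*$ of order $q-1$ lies inside the unique index-two subgroup $(\mathbb{F}_{q^2}^*)^2$; for $q$ even every element is a square since $q^2-1$ is odd. Consequently both residue values $1$ and $\frac{-1}{m(q-1)}$ are squares, so condition (2) holds uniformly across the two place types, independently of the parity of $q$.

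Next I would pin down the canonical divisor. As in Theorem \ref{coro}, combining $(dx)=-2P_\infty$ with the zero/pole structure of $x(x^{m(q-1)}-1)$ and pulling back along the totally ramified $P_\infty=qQ_\infty$ yields $(\eta)=(n+2g-2)Q_\infty-D$, where $n+2g-2=m(q^2-1)-1$. Taking $G=rQ_\infty$, condition (1) of Lemma \ref{ML2}, namely $(\eta)\geq 2G-D$, becomes $2r\leq m(q^2-1)-1$, i.e. $r\leq\lfloor\frac{m(q^2-1)-1}{2}\rfloor$, matching the stated upper bound. Lemma \ref{ML2} then supplies a divisor $G'$ equivalent to $rQ_\infty$ with $C_\mathscr{L}(D,G')$ Euclidean self-orthogonal. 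For the parameters I would invoke the lower bound $r\geq mq-m-q=2g-1$: since $\deg G'=r>2g-2$ and $\deg(G'-D)=r-n<0$, the Riemann--Roch Theorem gives $k_0=l(G')-l(G'-D)=r+1-g=r-\frac12(m-1)(q-1)+1$, while Theorem \ref{AG} gives $d_0\geq n-r$. This settles part (1).

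For part (2) I would strengthen condition (1) to the exact equality $(\eta)=2rQ_\infty-D$, which by Lemma \ref{ML2} forces self-duality. This demands that $2r=m(q^2-1)-1$ have an integer solution, i.e. that $m(q^2-1)-1$ be even. When $q$ is even, $q^2-1$ is odd and $m\mid q+1$ is odd, so $m(q^2-1)$ is odd and $r=\frac{m(q^2-1)-1}{2}$ is an integer; this is precisely where the parity hypothesis enters. A short computation then gives $r+1-g=\frac n2$, confirming the dimension, and $d_0\geq n-r=\frac n2-g+1=\frac n2-\frac{(m-1)(q-1)}{2}+1$, yielding the claimed self-dual parameters. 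The only real obstacle is the square-class bookkeeping of the second paragraph; the remainder is a transcription of the Hermitian argument with $(q+1)G$ replaced by $2G$.
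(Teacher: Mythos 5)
Your proposal is correct and follows essentially the same route as the paper: the same differential $\eta=\frac{-dx}{x(x^{m(q-1)}-1)}$, the same application of Lemma \ref{ML2} with $(\eta)=(n+2g-2)Q_\infty-D$, the same Riemann--Roch dimension count under $r\geq 2g-1$, and the same choice $2r=m(q^2-1)-1$ for part (2). In fact you supply details the paper leaves implicit — the residue value $1$ at the places over $P_0$ versus $\frac{-1}{m(q-1)}$ at the others, the verification that $\mathbb{F}_q^*\subseteq(\mathbb{F}_{q^2}^*)^2$ so that condition (2) of Lemma \ref{ML2} holds, and the parity analysis showing exactly where evenness of $q$ is needed for self-duality.
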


\begin{proof}

\begin{itemize}
  \item[(1).] Consider $\eta=\frac{-dx}{x(x^{m(q-1)}-1)}$. Then it is easy to see $v_{Q_{i,j}}(\eta)=-1$ for all $i,j$. According to Lemma \ref{M}, ${\rm res}_{Q_{i,j}}(\eta)=\frac{-1}{m(q-1)}$ for all $i,j$. Since these residue values are identical, it follows that they are either all square elements or all non-square elements. On the other hand,
  $$
  \begin{aligned}
  \nonumber
      (\eta) &= q(m(q-1)+1)Q_\infty-D+(dx)           \\
             &= (n+2g-2)Q_\infty-D.
  \end{aligned}
  $$
  Since $2r\leq n+2g-2=m(q^2-1)-1$, the conditions in Lemma \ref{ML2} are satisfied. In this case, there exists a divisor $G^{\prime}$ equivalent to $rQ_\infty$ such that the Euclidean self-orthogonal code $C_\mathscr{L}(D,G^{\prime})$ has length $n$. Since $2g-2<r<\deg D=mq^2-mq+q$ and $G^{\prime}$ is equivalent to $rQ_\infty$, by Riemann-Roch Theorem the dimension of $C_\mathscr{L}(D,G^{\prime})$ is
  $$
  k_0=l(rQ_{\infty})-l(rQ_{\infty}-D)\geq r-\frac{1}{2}(m-1)(q-1)+1.
  $$
  Since $r < \deg D$ implies $l(rQ_{\infty}-D) = 0$. Moreover, for $r>2g-2=mq-m-q-1$, the equality holds, that is, $k_0=r-\frac{1}{2}(m-1)(q-1)+1$. Its minimal distance is at least $n-\deg(rQ_{\infty})=n-r$.
  \item[(2).] Consider $2r={m(q^2-1)-1}$. Then there exists a divisor $G^{\prime}$ equivalent to $rQ_\infty$ such that $C_\mathscr{L}(D,G^{\prime})=C_\mathscr{L}(D,G^{\prime})^{\perp}$. Then $C_\mathscr{L}(D,G^{\prime})$ is self-dual with desired parameters.
\end{itemize}
\end{proof}

In this way, we obtain a $\left[mq^2-mq+q,\frac{mq^2-mq+q}{2},\geq \frac{mq^2-mq+q}{2}-\frac{(m-1)(q-1)}{2}+1\right]_{q^2}$ self-dual code with minimal distance close to $n/2$, since $\frac{(m-1)(q-1)}{2}+1$ is relatively small compared to $n$ when $q$ is large.

\begin{example}\label{TE}

    Note that $\frac{n-1}{2}=\frac{(mq+1)(q-1)}{2}$. Let $r_0=\frac{mq^2-m}{2}-1$, yielding $r_0\geq 2g-1$. Consequently, by the Riemann-Roch Theorem,
    $$l(r_0Q_\infty)=r_0+1-\frac{(m-1)(q-1)}{2}=\frac{mq^2-m}{2}-\frac{(m-1)(q-1)}{2}=\frac{mq^2-mq+q-1}{2}.$$
    This implies that $C_\mathscr{L}(D,G^{\prime})$ is a Euclidean self-orthogonal code with parameters $[n,\frac{n-1}{2},\geq \frac{1}{2}(mq^2+m)-(m-1)q+1]_{q^2}$. For example, let $q=5, m=3$. Then we obtain a Euclidean almost self-dual code with parameters $[n,k,d]_{q^2}=[65,32,\geq 30]_{25}$, which satisfy $k+d\geq 62$, close to the code length $n=65$.
    $\hfill\blacksquare$
\end{example}

\subsection{Codes from Hermitian curves}

\qquad Denote by $\mathbb{F}_{q^2}(\chi)$ the Hermitian function field of $\chi$ over $\mathbb{F}_{q^2}$ with
$$\chi:y^q+y=x^{q+1}.$$
The genus of $\chi$ is given by $g = \frac{q(q-1)}{2}$. The simple pole $P_\infty$ of $x$ in $\mathbb{F}_{q^2}(x)$ is totally ramified in $\mathbb{F}_{q^2}(\chi)$, and the place lying over $P_\infty$ is denoted by $Q_\infty$. The places $(x-\alpha)$ split completely for any $\alpha \in \mathbb{F}_{q^2}$.

\subsubsection{From multiplicative groups}

\qquad For $s\mid q^2-1$, where $s$ is a square element, let $n=q(s+1)$. Let $\{P_1,P_2,\cdots,P_s\}$ be the set of rational places of $\mathbb{F}_{q^2}(x)$ corresponding to the $s$-th roots of unity in $\mathbb{F}_{q^2}$. Let $P_0$ be the rational place $(x)$.

Then these finite places $P_i$ split completely. Let $Q_{i ,j}$ be all places of $\mathbb{F}_{q^2}(\chi)$ that lie over $P_i$. Let $$D=\sum\limits_{i=0}^{s}\sum\limits_{j=1}^q{{Q_{i,j}}}.$$

\begin{theorem}\label{q(q+1)}

    \begin{itemize}
        \item[{\rm(1).}] There exist Euclidean self-orthogonal codes with parameters
        $$\left[q(s+1),r-\frac{1}{2}q(q-1)+1,\geq q(s+1)-r\right]_{q^2}$$
        for $q^2-q-1\leq r\leq\lfloor\frac{1}{2}q(q+s)-1\rfloor$.
        \item[{\rm(2).}] There exists a Euclidean self-dual code with parameters
        $$\left[q(s+1),\frac{q(s+1)}{2}, \geq \frac{q(s+1)}{2}-\frac{q(q-1)}{2}+1\right]_{q^2}$$
        when $q$ is even or $s$ is odd.
    \end{itemize}
\end{theorem}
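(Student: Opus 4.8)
The plan is to construct the codes by invoking Lemma \ref{ML2} with an explicitly chosen differential, following the template of the proof of Theorem \ref{mc} verbatim. First I would take
$$\eta = \frac{-dx}{x(x^s-1)},$$
whose finite poles are exactly the places $x=0$ and the $s$-th roots of unity, i.e.\ the places $P_0,P_1,\dots,P_s$ lying underneath $D$. Since every $P_i$ splits completely in the Hermitian extension $\mathbb{F}_{q^2}(\chi)/\mathbb{F}_{q^2}(x)$, the downstairs uniformizer ($x-\alpha^i$, resp.\ $x$) remains a uniformizer at each $Q_{i,j}$, so $\eta$ has a simple pole there, giving $v_{Q_{i,j}}(\eta)=-1$, and its residue at $Q_{i,j}$ equals the residue computed on $\mathbb{F}_{q^2}(x)$.

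Next I would compute these residues. Writing $x^s-1=\prod_{j=1}^{s}(x-\alpha^j)$ for a primitive $s$-th root of unity $\alpha\in\mathbb{F}_{q^2}$, Lemma \ref{M} (applied over $\mathbb{F}_{q^2}$, since $s\mid q^2-1$) gives $\prod_{j\neq i}(\alpha^i-\alpha^j)=s\,\alpha^{-i}$, so that
$$\mathrm{res}_{Q_{i,j}}(\eta)=\frac{-1}{\alpha^i\cdot s\alpha^{-i}}=\frac{-1}{s}\ (1\leq i\leq s),\qquad \mathrm{res}_{Q_{0,j}}(\eta)=1.$$
Both values lie in $\mathbb{F}_q^*$, and since $\mathbb{F}_q^*\subseteq(\mathbb{F}_{q^2}^*)^2$ (because $q-1\mid (q^2-1)/2$), every residue is a square in $\mathbb{F}_{q^2}$; this is where the hypothesis that $s$ is a square (hence $-1/s$, using that $-1$ is a square in $\mathbb{F}_{q^2}$) is transparently satisfied, and for even $q$ it holds automatically in characteristic $2$. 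Thus condition (2) of Lemma \ref{ML2} is met. For the divisor I would use $(dx)=(2g-2)Q_\infty$ together with $v_{Q_\infty}(x)=-q$ (as $P_\infty$ is totally ramified) to get $(x^{s+1}-x)=D-nQ_\infty$, whence
$$(\eta)=(dx)-(x^{s+1}-x)=(n+2g-2)Q_\infty-D,\qquad n+2g-2=q(q+s)-2.$$

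Taking $G=rQ_\infty$, condition (1) of Lemma \ref{ML2}, namely $(\eta)\geq 2G-D$, is equivalent to $2r\leq q(q+s)-2$, i.e.\ $r\leq\lfloor\frac{1}{2}q(q+s)-1\rfloor$; hence there is a divisor $G'$ equivalent to $rQ_\infty$ with $C_\mathscr{L}(D,G')$ Euclidean self-orthogonal. For the parameters I would apply Theorem \ref{AG}: the minimum distance is at least $n-\deg(rQ_\infty)=q(s+1)-r$, and for $r\geq q^2-q-1=2g-1$ the Riemann–Roch Theorem gives $l(rQ_\infty)=r+1-g$; since the admissible range is nonempty only when $s\geq q-2$, which in turn forces $r<n$ and hence $l(rQ_\infty-D)=0$, the dimension is $r-\frac{1}{2}q(q-1)+1$, proving (1). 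For (2) I would specialize to $2r=q(q+s)-2$, so that $(\eta)=2G-D$ exactly; this makes $r$ an integer precisely when $q$ is even or $s$ is odd (equivalently, when $n$ is even), and Lemma \ref{ML2} then produces a self-dual code $C_\mathscr{L}(D,G')=C_\mathscr{L}(D,G')^{\perp}$ of dimension $\frac{q(s+1)}{2}$ and minimum distance at least $n-r=\frac{q(s+1)}{2}-\frac{q(q-1)}{2}+1$.

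The residue and divisor computations are routine once the template of Theorem \ref{mc} is in hand; the point requiring the most care is verifying that the dimension formula is exact. I must confirm that nonemptiness of the range $q^2-q-1\leq r\leq\lfloor\frac{1}{2}q(q+s)-1\rfloor$ forces $s\geq q-2$, and that this guarantees $r<n$ so that $l(rQ_\infty-D)=0$; otherwise this term would contribute, the dimension would exceed $n/2$, and self-orthogonality would fail. Reconciling the hypothesis ``$s$ is a square element'' with the automatic squareness of $\mathbb{F}_q^*$-residues is the only conceptual subtlety, and it is benign.
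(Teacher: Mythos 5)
Your proposal is correct and follows essentially the same route as the paper's proof: the same differential $\eta=\frac{-dx}{x(x^s-1)}$, the same residue computation via Lemma \ref{M} (giving $-\frac{1}{s}$ and $1$), the same observation that these lie in $\mathbb{F}_q^*$ and hence are squares in $\mathbb{F}_{q^2}$, and the same application of Lemma \ref{ML2} with $G=rQ_\infty$ (taking $2r=n+2g-2$ for the self-dual case). You are in fact slightly more careful than the paper on one point it glosses over, namely verifying $l(rQ_\infty-D)=0$ so the dimension formula is exact, via the observation that nonemptiness of the range for $r$ forces $s\geq q-2$ and hence $r<n$.
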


\begin{proof}

    \begin{itemize}
        \item[(1).] The proof follows similarly to that of Theorem \ref{mc}. Denote by $n=q(s+1)$.
        Consider $\eta=\frac{-dx}{x(x^s-1)}$. It is easy to see $v_{Q_{i,j}}(\eta)=-1$. According to Lemma \ref{M},
        $$
        \mathrm{res}_{Q_{i,j}}(\eta)=
        \left\{
            \begin{aligned}
            \nonumber
                &-\frac{1}{s}        &   i \neq 0,            \\
                &1                     &   i=0.                 \\
            \end{aligned}
        \right.
        $$
        If $q$ is odd, all elements in $\F_{q}$ are square elements in $\F_{q^2}$. Consequently, the residues $\mathrm{res}_{Q_i}(\eta)$ are also square elements in $\F_{q^2}$.

        On the other hand, let $G=rQ_\infty$. It follows that
        $$
        \begin{aligned}
        \nonumber
            (\eta) &= q(s+1)Q_\infty-D+(dx)           \\
                   &= (n+2g-2)Q_\infty-D,
        \end{aligned}
        $$
        which implies $D-G+(\eta)=(n+2g-2-r)Q_\infty\geq G$ since $2r\leq n+2g-2$. Therefore $(\eta)\geq 2G-D$ and  there exists a divisor $G^{\prime}$ equivalent to $rQ_\infty$ such that $C_\mathscr{L}(D,G^{\prime})$ is Euclidean self-orthogonal according to Lemma \ref{ML2}. In this case, the code  $C_\mathscr{L}(D,G^{\prime})$ has length $n$, minimal distance at least $n-\deg(rQ_\infty)=n-r$. The dimension $k_0$ can be derived from Theorem \ref{AG} and Riemann-Roch Theorem.
        \item[(2). ] Consider $2r={q(q+s)-2}$. Then $C_\mathscr{L}(D,G^{\prime})$ is Euclidean self-dual by utilizing Lemma \ref{ML2}. Then $C_\mathscr{L}(D,G^{\prime})$ is self-dual with desired parameters.
    \end{itemize}
\end{proof}

\begin{example}
    Given that $q$ is odd and $s$ is even, it follows that $q(s+1)$ is odd. Moreover, when $s\geq q-2$, we can establish the existence of Euclidean self-orthogonal codes $\left[q(s+1),\frac{q(s+1)-1}{2}, \geq \frac{1}{2}q(s-q+2)+\frac{3}{2}\right]_{q^2}$ for similar reasons as demonstrated in Example \ref{TE}. For example, let $q=25, s=338$. Then we obtain a Euclidean almost self-dual code with parameters $[n,k,d]_{q^2}=[8475,4237,\geq 3939]_{625}$, which satisfy $k+d\geq 8176$, close to the code length $n=8475$.
    $\hfill\blacksquare$
\end{example}

\subsubsection{From additive groups}

\begin{theorem}\label{T10}

    For $(q+1)\mid k$, it follows that
    \begin{itemize}
        \item[\rm(1).] There exist Euclidean self-orthogonal codes with parameters
        $$\left[q^3-qk,r-\frac{q(q-1)}{2}+1,\geq q^3+\frac{q(q-1)}{2}-qk-1\right]_{q^2}$$
        for $q^2-q-1\leq r\leq \lfloor \frac{q^3-qk}{2}\rfloor+\frac{q(q-1)}{2}-1$.
        \item[\rm(2).] There exists a Euclidean self-dual code with parameters
        $$\left[q^3-qk,\frac{q^3-qk}{2}, \geq \frac{q^3-qk}{2}-\frac{q(q-1)}{2}+1\right]_{q^2}$$
        when $q$ is even or $k$ is odd.
    \end{itemize}
\end{theorem}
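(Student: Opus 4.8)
The plan is to adapt the additive-group construction behind Theorem~\ref{T5}: we keep a differential of the form $\eta=\frac{dx}{f(x)}$ but choose the evaluation set $S$ so that every residue becomes a \emph{square} in $\mathbb{F}_{q^2}$, which is exactly what the Euclidean criterion of Lemma~\ref{ML2} requires. Write $k=(q+1)j$ and let $U=\{u\in\mathbb{F}_{q^2}^{*}:u^{q+1}=1\}$ be the group of $(q+1)$-th roots of unity; since $|\mathbb{F}_{q^2}^{*}|=(q-1)(q+1)$, the group $U$ has exactly $q-1$ cosets in $\mathbb{F}_{q^2}^{*}$. Remove any $j$ of them to form a set $R$, put $S=\mathbb{F}_{q^2}\setminus R$ (so $0\in S$ and $|S|=q^2-k$), and let $D$ be the sum of all places of $\mathbb{F}_{q^2}(\chi)$ lying over the completely split places $(x-d)$, $d\in S$. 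Then $n=\deg D=q|S|=q^3-qk$. Set $f(x)=\prod_{d\in S}(x-d)$ and $\eta=\frac{dx}{f(x)}$.

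The heart of the argument is to show that all residues $\mathrm{res}_{P}\eta$ lie in $\mathbb{F}_q^{*}$. Since $(x-d)$ is unramified, $x-d$ is a uniformizer at each of the $q$ places $P$ above it, so $\mathrm{res}_P\eta=1/f'(d)$, a constant independent of the chosen place over $(x-d)$. Writing $f_R(x)=\prod_{d\in R}(x-d)$ we have $f(x)f_R(x)=x^{q^2}-x$; because each removed coset $cU$ contributes the factor $\prod_{u\in U}(x-cu)=x^{q+1}-c^{q+1}$ with $c^{q+1}=N_{\mathbb{F}_{q^2}/\mathbb{F}_q}(c)\in\mathbb{F}_q^{*}$, the polynomial $f_R$ has all coefficients in $\mathbb{F}_q$. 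Differentiating $f\,f_R=x^{q^2}-x$ and using $(x^{q^2}-x)'=-1$ together with $d^{q^2}-d=0$ gives $f'(d)=-1/f_R(d)$ for $d\in S\setminus\{0\}$, hence $\mathrm{res}_P\eta=-f_R(d)\in\mathbb{F}_q^{*}$; the residue at the places over $(x)$ is handled separately via $\prod_{e\in\mathbb{F}_{q^2}^{*}}e=-1$ and again lands in $\mathbb{F}_q^{*}$. Finally every element of $\mathbb{F}_q^{*}$ is a square in $\mathbb{F}_{q^2}$ (for $q$ odd because $a^{(q^2-1)/2}=(a^{q-1})^{(q+1)/2}=1$, and trivially for $q$ even), so condition~(2) of Lemma~\ref{ML2} holds.

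It remains to pin down the divisor of $\eta$ and invoke the criterion. As for every Hermitian curve $(dx)=(2g-2)Q_\infty$ (from the Hurwitz formula, the different being concentrated at the totally ramified $Q_\infty$), while $(x-d)=\sum_{j}Q_{d,j}-qQ_\infty$ gives $(f)=D-nQ_\infty$; hence $(\eta)=(n+2g-2)Q_\infty-D$. Condition~(1) of Lemma~\ref{ML2}, namely $(\eta)\ge 2rQ_\infty-D$, therefore reduces to $r\le\lfloor\frac{n+2g-2}{2}\rfloor=\lfloor\frac{q^3-qk}{2}\rfloor+\frac{q(q-1)}{2}-1$ (the floor absorbing the parity of $n$, as $2g-2$ is even). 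For $r$ in the stated range we obtain a divisor $G'\sim rQ_\infty$ with $C_{\mathscr L}(D,G')$ Euclidean self-orthogonal; since $r\ge q^2-q-1=2g-1>2g-2$ and $\deg(rQ_\infty-D)=r-n<0$, the Riemann--Roch theorem yields $\dim=l(rQ_\infty)=r+1-g=r-\frac{q(q-1)}{2}+1$, and Theorem~\ref{AG} gives $d_0\ge n-r$.

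For part~(2) I would take $r$ with $2r=n+2g-2$, which forces $(\eta)=2rQ_\infty-D$ and hence self-duality by Lemma~\ref{ML2}; this requires $n$ even, which holds precisely when $q$ is even (for $q$ odd, $(q+1)\mid k$ makes $k$ even and $n=q(q^2-k)$ odd, so only almost self-dual codes arise). The resulting code has length $n$, dimension $n/2$ and minimum distance at least $\frac{n}{2}-g+1=\frac{q^3-qk}{2}-\frac{q(q-1)}{2}+1$. I expect the only genuine obstacle to be the residue computation: the key realization is that deleting \emph{entire multiplicative cosets} of the $(q+1)$-th roots of unity is exactly what makes $f_R$ an $\mathbb{F}_q$-polynomial (its roots occur in norm-constant orbits), forcing every residue into $\mathbb{F}_q^{*}$ and thus into the squares of $\mathbb{F}_{q^2}$; handling the $d=0$ residue and the parity/floor bookkeeping are the remaining routine but error-prone points.
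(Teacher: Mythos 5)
Your proposal is correct and follows essentially the same route as the paper: the same differential $\eta=\frac{dx}{f(x)}$ with residues extracted from the factorization $f\cdot f_R=x^{q^2}-x$, the same divisor computation $(\eta)=(n+2g-2)Q_\infty-D$ feeding Lemma~\ref{ML2}, and the same Riemann--Roch dimension count; the paper simply takes $R=U_k$, the group of $k$-th roots of unity (so $f_R(x)=x^k-1$ and the residue is $1-\alpha^k\in\mathbb{F}_q^*$ because $(q+1)\mid k$), which is the special case of your ``any $j$ cosets of $U$'' choice, so your version is in fact mildly more general. One wording repair: ``$f_R$ has coefficients in $\mathbb{F}_q$'' does not by itself force $f_R(d)\in\mathbb{F}_q$ for $d\in\mathbb{F}_{q^2}$; instead use your own displayed factorization, $f_R(d)=\prod_i\bigl(d^{q+1}-c_i^{q+1}\bigr)$ together with $d^{q+1}=N_{\mathbb{F}_{q^2}/\mathbb{F}_q}(d)\in\mathbb{F}_q$, which is exactly what your ``norm-constant orbits'' remark is pointing at. Your bound $d_0\ge n-r$ and your reading of part (2) as ``$n$ even iff $q$ even'' are what the paper's proof actually establishes (the printed distance bound in (1) and the stray ``$s$'' in (2) are typos in the theorem statement).
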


\begin{proof}

    \begin{itemize}
        \item [(1).] Denote by $U_k$ the set of $k$-th roots of unity in $\F_{q^2}^*$.
        The places $(x-d)$ for $d\in \mathbb{F}_{q^2}\backslash U_k$ split completely in $\mathbb{F}_{q^2}(\chi)$, and we denote the sum of those places of $\mathbb{F}_{q^2}(\chi)$ by $D$. Let $\eta=\frac{dx}{f(x)}=\frac{dx}{\prod\limits_{\alpha\in\mathbb{F}_{q^2}\backslash U_k} {( x - \alpha ) }}$. For $\alpha\in\mathbb{F}_{q^2}\backslash U_k$, we have
        $$
        \left. {\frac{f(x)}{x-\alpha}} \right|_{x=\alpha}=\left. {\frac{x^{q^2}-x}{x-\alpha}} \right|_{x=\alpha}\cdot \left. {\frac{1}{\prod\limits_{\alpha\in U_k}(x-\alpha)}} \right|_{x=\alpha}= \left. {\frac{-1}{x^k-1}} \right|_{x=\alpha}=\frac{-1}{\alpha^k-1}.
        $$

        On the other hand, if $r \leq g - 1 + \left\lfloor \frac{q^3 - qk}{2} \right\rfloor$, this implies $G \leq D - G + (\eta)$. Therefore, there exists a divisor $G'$ equivalent to $rQ_\infty$ such that $C_\mathscr{L}(D, G')$ is Euclidean self-orthogonal. Here, the condition $2g - 1 \leq r$ gives $l(G^\prime)=l(rQ_\infty) = r + 1 - g = r - \frac{1}{2} q(q - 1) + 1$ by the Riemann-Roch Theorem. Thus, the conclusion follows.
        \item[(2).] Consider $2r = q^3 - qk + q(q - 1) - 2$. Under this setting, the stronger condition in Lemma~\ref{ML2} holds, and consequently, $C_\mathscr{L}(D, G')$ is self-dual with the desired parameters.
    \end{itemize}
\end{proof}

\begin{example}
    Given that $q$ is odd, $k$ is even, and $(q+1)\mid k$, it follows that $q^3-qk$ is odd. If $q^3-qk\geq q^2-q-1$, we can construct  Euclidean self-orthogonal codes $[q^3-qk,\frac{q^3-qk-1}{2}, \geq \frac{q^3-qk}{2}-\frac{q(q-1)}{2}+1]_{q^2}$ for similar reasons as demonstrated in Example \ref{TE}. For example, let $q=9, k=70$. Then we obtain a Euclidean almost self-dual code with parameters $[99,49,\geq 32]_{81}$.
    $\hfill\blacksquare$
\end{example}

\qquad Suppose that $q=p^t$. For $k\leq2t$, let $V_k$ be a $k$-dimensional $\mathbb{F}_p$-subspace of $\mathbb{F}_{q^2}$.
Let $n=q\cdot p^k$ and let $\{P_1,P_2,\cdots,P_{p^k}\}$ be the set of rational places corresponding to the elements in $V_k$. Let $P_0$ be the rational place $(x)$.

Similarly to the previous case, all finite places $P_i$ split completely in $\mathbb{F}_{q^2}(\chi)$. Let $Q_{i ,j}$ be all places of $\mathbb{F}_{q^2}(\chi)$ that lie over $P_i$. Denote by $D=\sum\limits_{i = 1}^{p^k}\sum\limits_{j=1}^q {{Q_{i,j}}}$.

\begin{theorem}\label{T11}

    \begin{itemize}
        \item[{\rm(1).}] There exist Euclidean self-orthogonal codes with parameters
        $$\left[p^kq,r-\frac{1}{2}q(q-1)+1,\geq p^kq-r\right]_{q^2}$$
        when $q^2-q-1\leq r\leq\lfloor\frac{q(p^k+q-1)}{2}-1\rfloor$.
        \item[{\rm(2).}] There exists a Euclidean self-dual code with parameters
        $$\left[p^kq,\frac{p^kq}{2},\geq \frac{ p^kq}{2}-\frac{q(q-1)}{2}+1\right]_{q^2}$$
        when $q$ is even.
    \end{itemize}
\end{theorem}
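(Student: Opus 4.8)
The plan is to follow the same template as Theorems \ref{mc}, \ref{q(q+1)} and \ref{T10}: exhibit a differential $\eta$ on $\F_{q^2}(\chi)$ whose residues at the places in $\mathrm{supp}(D)$ all lie in a single square class, compute $(\eta)$ explicitly, and invoke Lemma \ref{ML2} with $G=rQ_\infty$. The one new ingredient, compared with Theorem \ref{T5}, is that the evaluation set is an $\F_p$-subspace, which lets me replace the coset bookkeeping there by a one-line residue computation. First I would set $f(x)=\prod_{\alpha\in V_k}(x-\alpha)$ and $\eta=\frac{dx}{f(x)}$. Since $V_k$ is an $\F_p$-subspace, $f$ is an $\F_p$-linearized (additive) polynomial $f(x)=\sum_{i=0}^{k}c_i x^{p^i}$ with $c_k=1$, so in characteristic $p$ its derivative collapses to the constant $f'(x)=c_0$, where $c_0=\prod_{0\ne\alpha\in V_k}(-\alpha)\ne 0$. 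As each place $(x-\alpha)$ with $\alpha\in V_k$ splits completely in $\F_{q^2}(\chi)$ and $x-\alpha$ stays a uniformizer at every $Q_{i,j}$ above it, the simple-pole residue is $\mathrm{res}_{Q_{i,j}}(\eta)=1/f'(\alpha)=1/c_0$ for all $i,j$. The residues are therefore all equal, so condition (2) of Lemma \ref{ML2} holds automatically, with no case split on whether $1/c_0$ is a square; this is exactly why the argument runs uniformly in both parities.

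Next I would compute $(\eta)$ on $\F_{q^2}(\chi)$. On $\F_{q^2}(x)$ one has $(dx)=-2P_\infty$ and $(f)=\sum_{\alpha\in V_k}(x-\alpha)-p^k P_\infty$. Pulling back to the Hermitian curve, $P_\infty$ is totally ramified with $e=q$ and different exponent $q^2+q-2$ at $Q_\infty$, so $(dx)_\chi=(-2q+q^2+q-2)Q_\infty=(q^2-q-2)Q_\infty=(2g-2)Q_\infty$, while the finite places split completely and assemble into $D$. This gives
\[
(\eta)=(dx)_\chi-(f)_\chi=(q^2-q-2)Q_\infty-\bigl(D-p^k q\,Q_\infty\bigr)=(n+2g-2)Q_\infty-D,
\]
with $n=p^k q$ and $g=\frac{q(q-1)}{2}$, precisely the shape used in the earlier proofs.

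Finally, applying Lemma \ref{ML2} with $G=rQ_\infty$, the hypothesis $(\eta)\ge 2G-D$ is equivalent to $2r\le n+2g-2$, i.e. $r\le\lfloor\frac{q(p^k+q-1)}{2}-1\rfloor$, which yields a Euclidean self-orthogonal $C_\mathscr{L}(D,G')$. For $r\ge 2g-1=q^2-q-1$ the Riemann-Roch Theorem gives $l(rQ_\infty)=r+1-g$ and $l(rQ_\infty-D)=0$ (as $\deg(rQ_\infty-D)=r-n<0$ on the admissible range, which is nonempty precisely when $p^k\ge q-1$), so by Theorem \ref{AG} the dimension is $r-\frac{q(q-1)}{2}+1$ and $d\ge n-r=p^k q-r$, proving (1). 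For (2) I would take $r=\frac{n+2g-2}{2}$ so that $(\eta)=2G-D$ exactly; this requires $n=p^k q$ even, i.e. $q$ even, and then Lemma \ref{ML2} upgrades the code to self-dual of dimension $n/2$ with $d\ge\frac{n}{2}-g+1=\frac{p^k q}{2}-\frac{q(q-1)}{2}+1$.

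The only genuinely delicate points are bookkeeping: getting the different exponent at $Q_\infty$ right so that $(\eta)$ lands in the form $(n+2g-2)Q_\infty-D$ (cross-checked against $\deg(\eta)=2g-2$), and confirming that the stated range of $r$ is nonempty, which forces the implicit constraint $p^k\ge q-1$ on the subspace dimension $k$. The residue step, usually the crux in this circle of results, is trivialized here by the linearized structure of $f$, so I expect the divisor computation rather than the self-orthogonality criterion to be where care is needed.
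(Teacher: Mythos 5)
Your proposal is correct and follows essentially the same route as the paper: the same differential $\eta=dx/f(x)$ with $f(x)=\prod_{\alpha\in V_k}(x-\alpha)$, constant residues at all $Q_{i,j}$ (you justify this via the linearized-polynomial derivative $f'\equiv c_0$, the paper via translation-invariance of $V_k$ — the same fact), the same divisor identity $(\eta)=(n+2g-2)Q_\infty-D$, and the same invocation of Lemma \ref{ML2} with $2r\le n+2g-2$ for part (1) and $2r=n+2g-2$ for part (2). Your extra bookkeeping (the different exponent $q^2+q-2$ at $Q_\infty$, the nonemptiness constraint $p^k\ge q-1$, and $l(rQ_\infty-D)=0$) only makes explicit what the paper leaves implicit, and your derived bound $d\ge\frac{p^kq}{2}-\frac{q(q-1)}{2}+1$ in (2) is the correct one, indicating that the $\frac{q(q+1)}{2}$ in the theorem statement is a typo for $\frac{q(q-1)}{2}$.
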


\begin{proof}

    \begin{itemize}
        \item[(1).] Let $n = p^k q$. Define $h(x) = \prod\limits_{\alpha \in V_k} (x - \alpha)$. For each $\alpha \in V_k$, consider $h_\alpha(x) = \frac{h(x)}{x - \alpha}$. Since $V_k$ is stable under translation by elements of $V_k$, the value $h_\alpha(x)\big|_{x=\alpha}$ is constant for all $Q_{i,j}$ lying over $P_i$ with $1 \leq i \leq n$. Denote this constant by $\lambda$. Then $\lambda = \operatorname{res}_{Q_{i,j}}(h(x))$. Let $\eta = \frac{\lambda dx}{h(x)}$. Then $(\eta)=(n+2g-2)Q_{\infty}-D$. Since $2r \leq n + 2g - 2$, there exists a divisor $G'$ equivalent to $r Q_\infty$ such that $C_\mathscr{L}(D, G')$ is Euclidean self-orthogonal by Lemma~\ref{ML2}.

        In this case, the code $C_\mathscr{L}(D, G')$ has length $n$, minimal distance at least $n - \deg(r Q_\infty) = n - r$.
        The dimension $k_0$ can be derived from Theorem~\ref{AG} and the Riemann-Roch Theorem.
        \item[(2).] Consider $2r = p^k q + q(q - 1) - 2$. Then the stronger condition in Lemma~\ref{ML2} holds. Consequently, $C_\mathscr{L}(D, G')$ is self-dual with the desired parameters.
    \end{itemize}
\end{proof}

\begin{example}
    Given that $q$ is odd and $p^k\geq q-1$, it follows that $p^kq$ is odd. Moreover, with $p^k\geq q-1$, we can establish the existence of Euclidean self-orthogonal codes $\left[p^kq,\frac{p^kq-1}{2},\geq \frac{p^kq-q(q-1)+3}{2}\right]_{q^2}$ for similar reasons as demonstrated in Example \ref{TE}. For example, let $p=3, q=27, k=5$. Then we obtain a Euclidean almost self-dual code with parameters $[n,k_0,d]_{q^2}=[6561,3280,\geq 2931]_{729}$, which satisfies $k_0+d\geq 6211$, close to the code length $n=6561$.
    $\hfill\blacksquare$
\end{example}

\section{Conclusion and further study}

\qquad In this paper, we generalized the criterion for self-orthogonality and applied it to construct self-orthogonal and self-dual codes over function fields arising from Hermitian curves and their quotients over $\mathbb{F}_{q^2}$. These codes demonstrated both strong parameters and desirable Euclidean or Hermitian self-orthogonal (or self-dual) properties. The following tables present a comparison of our results with existing findings.

In Table~1, the quantum codes
$[[mq^2 - mq + q,\, mq^2 - m - 2r - 1,\, \geq r - mq + m + q + 1]]_q$
generalize the results in \cite{J}, corresponding to the special case where $q = 2^\ell$ with $\ell$ odd and $m = 3$. 
In contrast, our construction yields $q - 1$ quantum codes for each $q$ (both even and odd) with large lengths, which has received limited attention in the literature.

In Table 2, our construction yields self-dual codes with greater parameter flexibility and longer lengths for even $q$ compared to the results in \cite{Sok}. Additionally, relative to the findings in \cite{FL}, we impose fewer restrictions on the parameters and provide different code lengths.

Furthermore, for odd $q$, Euclidean self-dual codes do not exist in some cases because $n$ is odd. Nevertheless, we construct Euclidean almost self-dual codes with improved parameters for even lengths and even $q$, an area with limited related studies, as shown in Table 3.

\begin{sidewaystable}[thp]
    \renewcommand{\arraystretch}{2.0}
    \caption{Some known results on quantum codes from algebraic curves.}
    \centering
    \begin{tabular}{c|c|c|c} 
        \textbf{$q$} & {$Code$} & \textbf{algebraic curves} & \textbf{Reference}\\
        \hline
        $2\mid q$, $q-1\leq m\leq 2q-2$  &  $[[2q^2,2q^2-2m+q-2,\geq m+2-q]]_q$   &   $y^2+y=x^{q+1}$     &   \cite{J}  \\
        $2\mid q$, $2q-3\leq m\leq 3q-4$  &  $[[3q^2-2q,3q^2-2m-4,\geq m+4-2q]]_q$   &   $y^q+y=x^3$     &   \cite{J}  \\
        $2m+2\leq n\leq q+\lfloor 2\sqrt{q} \rfloor -5$  &  $[[n,n-2m,m]]_q$   &   elliptic curve     &   \cite{JX2} \\
        $\sim$  &  \makecell[c]{$[[q^8-q^6+q^5,q^8-q^6+$\\$2q^5-2q^3+q^2-2-2s,\geq s-q^5+2q^3-q^2+2]]_{q^6}$}  &    GK curve    &   \cite{BMZ}  \\
        \ &\ &\ &\ \\
        \makecell[c]{$(q-1)(q^{n+1}+q^n-q^2)-2\leq s\leq$\\$\left((q^{n-1}-1)\frac{q^n+1}{q+1}+1\right)q^3/{2}+$\\$(q-1)(q^{n+1}+q^n-q^2)/2-1$}  &  \makecell[c]{$[[q^{2n+2}-q^{n+3}+q^{n+2},q^{2n+2}-q^{n+3}+q^{n+2}+$\\$(q-1)(q^{n+1}+q^{n}-q^{2})-2-2s,$\\$\geq s-(q-1)(q^{n+1}+q^n-q^2)+2]]_{q^{2n}}$}      &    GGS curve    &   \cite{BMZ}  \\
        \ &\ &\ &\ \\
        \makecell[c]{$(q-1)(q^n-q)-2\leq s\leq$\\$\left((q^{n-1}-1)\frac{q^n+1}{q+1}+1\right)q^2/{2}+$\\$(q-1)(q^n-q)/2-1$}  &  \makecell[c]{$[[q^{2n+1}-q^{n+2}+q^{n+1},q^{2n+1}-q^{n+2}+q^{n+1}+(q-1)$\\$(q^{n}-q)-2-2s,\geq s-(q-1)(q^{n}-q)+2]]_{q^{2n}}$}      &    Abd$\acute{o}$n-Bezerra-Quoos curve    &   \cite{BMZ}  \\
        \ &\ &\ &\ \\
        $m\mid q+1$,$mq-m-q\leq r\leq m(q-1)-1$  &  $[[mq^2-mq+q, mq^2-m-2r-1, \geq r-mq+m+q+1]]_q$   &    $y^q+y=x^m$    &   Theorem \ref{coro} \\
    \end{tabular}
\end{sidewaystable}

\begin{sidewaystable}[thp]
    \renewcommand{\arraystretch}{2.0}
    \caption{Some known results on self-dual codes from algebraic curves.}
    \centering
    \begin{tabular}{c|c|c|c} 
        \textbf{$q$} & {$n$} & \textbf{algebraic curves} & \textbf{Reference}\\
        \hline
        $q=r^2,r\equiv 3\pmod4$     &      $2tr,t\leq \frac{r-2}{2}$    &  projective line  & \cite{JX} \\
        $4\mid q$  &  $4\leq n\leqq+\lfloor2\sqrt{q}\rfloor-2$   &   $x^3+a_3x^2+a_4x+a_5$     &   \cite{JH}  \\
        $2\mid q$  &  \makecell[c]{$n=2n^{\prime},1\leq n^{\prime}\leq|U|,U=\{\alpha\in\mathbb{F}_{q}|Tr(\alpha^{3}+\alpha)=0\}$}   &   $y^q+y=x^3+x$     &   \cite{Sok}  \\
        $8\mid q$  &  \makecell[c]{$n=2n^{\prime},1\leq n^{\prime}\leq|U|,U=\{\alpha\in\mathbb{F}_{q}|Tr(\alpha^5)=0\}$}   &   $y^q+y=x^5$     &   \cite{Sok}  \\
        \makecell[c]{$q$ odd, $q \equiv 1 \pmod{4}$\\$n \mid (q-1)$, $n$ is odd\\there exists $\beta \subseteq \mathbb{F}_q^*$ such that $\beta^n - 1 \in Q R_q$} & $3n+1$  & hyperelliptic curve  &   \cite{FL} \\
        \ &\ &\ &\ \\
        \makecell[c]{$q = 2^{2e} $, $m\mid (2^e+1)$, $m(n-l)+l$ even\\$U \subseteq \mathbb{F}_q, |U| = n,  U' = \{ a \in U \mid a^{2^e} + a = 0 \}, |U'| = l$} & $m(n-l)+l$  & $y^m=x^{2^e}+x$  &   \cite{FL} \\
        \ &\ &\ &\ \\
        \makecell[c]{$q = 2^{2e} $, $m\mid (2^e+1)$, $m(n-l)+l$ odd\\$U \subseteq \mathbb{F}_q, |U| = n,  U' = \{ a \in U \mid a^{2^e} + a = 0 \}, |U'| = l$} & $m(n-l)+l+1$  & $y^m=x^{2^e}+x$  &   \cite{FL} \\
        \ &\ &\ &\ \\
        \makecell[c]{$q = q_0^2$ for odd $q_0$, $m \mid (q_0 + 1), n = \mu(q_0 - 1) + \nu(q_0 + 1)$\\$1 \leq \mu < \frac{q_0 + 1}{2}$, $ 1 \leq \nu < \frac{q_0 - 1}{2}$\\ $q\equiv 1(\mod4),\mu$ even, $m(n-\nu)+\nu$ even} & $m(n-\nu)+\nu$  & $y^m=x^{2^e}+x$  &   \cite{FL} \\
        \ &\ &\ &\ \\
        \makecell[c]{$q = q_0^2$ for odd $q_0$, $m \mid (q_0 + 1), n = \mu(q_0 - 1) + \nu(q_0 + 1)$\\$1 \leq \mu < \frac{q_0 + 1}{2}$, $ 1 \leq \nu < \frac{q_0 - 1}{2}$\\ $q\equiv 3(\mod4),\mu$ odd, $m(n-\nu)+\nu$ odd} & $mn$  & $y^m=x^{2^e}+x$  &   \cite{FL} \\
        \ &\ &\ &\ \\
        $q = p^{2e}$ for odd $ p $, $ m \mid (q_0 + 1) $, $ 1 \leq c < e $ & $m(p^{2c}-1)+2$  & $y^m=x^{2^e}+x$  &   \cite{FL} \\
    \end{tabular}
\end{sidewaystable}

\begin{sidewaystable}[thp]
    \renewcommand{\arraystretch}{2.0}
    \caption*{Table 2: Some known results on self-dual codes from algebraic curves.} 
    \centering
    \begin{tabular}{c|c|c|c} 
        \textbf{$q$} & {$n$} & \textbf{algebraic curves} & \textbf{Reference}\\
        \hline
        \makecell[c]{$q = q_0^2$, $ q_0 \equiv3(\mod4)$, $ m \mid (q_0 + 1) $, $ s,t | (q - 1) $ \\
        $ 0 \leq \mu < \frac{s}{gcd(s,t)} $, $ 0 \leq \nu < \frac{t}{gcd(s,t)} $ \\
        $I = \{ c =v, v+1, \dots, \frac{q-1}{s}-1\mid \frac{2cs}{q_0+1}\equiv 1(\mod2) \}$, $|I|=l$ \\
        $s$ even, $ s\mid(q_0 + 1), t\mid(q_0 - 1), \gcd(s, \frac{q_0 + 1}{2}) > 2 $    \\
        or $(\frac{q-1}{s}-1)t-\frac{t(q_0+1)}{s}\mu$ and $\frac{t(q_0+1)}{s}\mu+t$ even    \\
        $m(\mu\frac{q-1}{s} + \nu\frac{q-1}{t}-2\mu\nu\frac{(q-1)gcd(s,t)}{st}-l)+l$ even
        } & $m(\mu\frac{q-1}{s} + \nu\frac{q-1}{t}-2\mu\nu\frac{(q-1)gcd(s,t)}{st}-l)+l$  & $y^m=x^{2^e}+x$  &   \cite{FL} \\
        \ &\ &\ &\ \\
        \makecell[c]{$q = q_0^2$, $ q_0 \equiv3(\mod4)$, $ m \mid (q_0 + 1) $, $ s,t | (q - 1) $ \\
        $ 0 \leq \mu < \frac{s}{gcd(s,t)} $, $ 0 \leq \nu < \frac{t}{gcd(s,t)} $ \\
        $I = \{ c =v, v+1, \dots, \frac{q-1}{s}-1\mid \frac{2cs}{q_0+1}\equiv 1(\mod2) \}$, $|I|=l$ \\
        $s$ even, $ s\mid(q_0 + 1), t\mid(q_0 - 1), \gcd(s, \frac{q_0 + 1}{2}) > 2 $    \\
        or $(\frac{q-1}{s}-1)t-\frac{t(q_0+1)}{s}\mu$ and $\frac{t(q_0+1)}{s}\mu+t$ even    \\
        $m(\mu\frac{q-1}{s} + \nu\frac{q-1}{t}-2\mu\nu\frac{(q-1)gcd(s,t)}{st}-l)+l$ odd
        } & $m(\mu\frac{q-1}{s} + \nu\frac{q-1}{t}-2\mu\nu\frac{(q-1)gcd(s,t)}{st}-l)+l+1$  & $y^m=x^{2^e}+x$  &   \cite{FL} \\
        \ &\ &\ &\ \\
        $q$ even, $m \mid q+1$  &  $n=mq^2-mq+q$   &    $y^q+y=x^m$    &  Theorem \ref{mc}  \\
        $q$ even or $s$ odd, $s \mid q^2-1$  &  $n=q(s+1)$   &    Hermitian curve    &  Theorem \ref{q(q+1)}  \\
        $q$ even or $k$ odd, $q+1\mid k$  &  $n=q^3-qk$   &    Hermitian curve    &   Theorem \ref{T10}  \\
        $q=p^t$ even, $k\leq 2t$    &  $n=p^kq$   &    Hermitian curve    &   Theorem \ref{T11}  \\
        \ &\ &\ &\ \\
        $q=p^m$ even, $s\mid 2m$, $2t\leq p^s$   &  $n=q(2t+1)p^{sl}$   &    Hermitian curve    &   Theorem \ref{T6}  \\
        \ &\ &\ &\ \\
        $q=p^m$, $s\mid 2m$, $2t\leq p^s$  &  $n=2tqp^{sl}$   &    Hermitian curve    &   Theorem \ref{T7}  \\
    \end{tabular}
\end{sidewaystable}

\begin{table}[thp]
    \renewcommand{\arraystretch}{2.0}
    \caption{Our results on self-orthogonal codes from algebraic curves.}
    \centering
    \begin{tabular}{c|c|c|c|c} 
        \textbf{Parameters restrictions} & $n$ & $k$ & $d$ & \textbf{Theorem} \\
        \hline
        \makecell[c]{$2g-1\leq r\leq g-1+\frac{q(2t+1)p^{sl}}{2}$\\$q = p^m, s \mid 2m, 2t \leq p^s, sl\leq m$\\$g=\frac{q(q-1)}{2}$}   &  $q(2t+1)p^{sl}$   &   $r+1-\frac{q(q-1)}{2}$     &   $\geq q(2t+1)p^{sl}-r$ & Theorem \ref{T6}  \\

        $mq-m-q\leq r\leq\lfloor\frac{m(q^2-1)-1}{2}\rfloor$     &      $mq^2-mq+q$    &  $r+1-\frac{q(q-1)}{2}$  & $\geq q(2t+1)p^{sl}-r$ & Theorem \ref{mc} \\

        $q^2-q-1\leq r\leq\lfloor\frac{1}{2}q(q+s)-1\rfloor$  &  $q(s+1)$   &   $r-\frac{1}{2}q(q-1)+1$     &   $\geq q(s+1)-r$ & Theorem \ref{q(q+1)}  \\

        $2g-1\leq r\leq g-1+\lfloor \frac{q^3-qk}{2}\rfloor$  &  $q^3-qk$   &    $r-\frac{q(q-1)}{2}+1$    &   $\geq q^3+\frac{q(q-1)}{2}-qk-1$ & Theorem \ref{T10}  \\

        $q^2-q-1\leq r\leq\lfloor\frac{q(p^k+q-1)}{2}-1\rfloor$.  &  $p^kq$   &    $r-\frac{1}{2}q(q-1)+1$    &   $\geq p^kq-r$ & Theorem \ref{T11}  \\
    \end{tabular}
\end{table}

\newpage
\vspace*{5cm}


\begin{thebibliography}{1}
      \bibliographystyle{IEEEtran}

        \bibitem{AK} A. Ashikhmin and E. Knill,
        ``Nonbinary quantum stabilizer codes," \textit{IEEE Trans. Inform. Theory,} vol. 47, no. 7, pp. 3065-3072, Nov. 2001.

        \bibitem{AM} E. F. Assmus, Jr. and H. F. Mattson, Jr,
        ``New 5-designs," \textit{J. Combinatorial Theory,} vol. 6, no. 2, pp. 122-151, Mar. 1969.

        \bibitem{GB} C. Bachoc and P. Gaborit,
        ``Designs and self-dual codes with long shadows," \textit{J. Combin. Theory Ser. A,}  vol. 105, no. 1, pp. 15-34, Jan. 2004.

        \bibitem{BBH} S. Bouyuklieva, I. Bouyukliev, and M. Harada,
        ``Some extremal self-dual codes and nunimodular lattices in dimension $40$," \textit{Finite Fields Appl.,} vol. 21, pp. 67-83, May 2013.

        \bibitem{BMZ} D. Bartoli, M. Montanucci, and G. Zini,
        ``On certain self-orthogonal AG codes with applications to quantum error-correcting codes," \textit{Designs, Codes Cryptogr.,} vol. 89, no. 6, pp. 1221-1239, Jun. 2021.

        \bibitem{CRSS} A. R. Calderbank, E. M. Rains, P. W. Shor, and N. J. A. Sloane,
        ``Quantum error correction via codes over $GF(4)$," \textit{IEEE Trans. Inform. Theory,}  vol. 44, no. 4, pp. 1369-1387, Jul. 1998.

        \bibitem{CDGU} R. Cramer, V. Daza, I. Gracia, J. J. Urroz,  C. Padr$\acute{o}$,
        ``On codes, matroids and secure multi-party computation from linear secret sharing schemes," \textit{Adv. Cryptology-CRYPTO2005 (Lecture Notes in Computer Science).} Berlin, Germany: Springer-Verlag, 2005, vol. 3621, pp. 327-343.

        \bibitem{FLLL} X. Fang, K. Lebed, H. Liu, and J. Luo,
        ``New MDS self-dual codes over finite fields of odd characteristic," \textit{Des. Codes Cryptogr.,} vol. 88, no. 6, pp. 1127-1138, Jun. 2020.

        \bibitem{FL} X. Fang and J. Liu,
        ``New constructions on self-dual algebraic geometry codes,'' \textit{Cryptography and Communications,} vol. 17, no. 4, Jul. 2025. doi: 10.1007/s12095-025-00821-7.

        \bibitem{FZXF} W. Fang, J. Zhang, S.-T. Xia, and F.-W. Fu,
        ``New constructions of self-dual generalized Reed-Solomon codes," \textit{Cryptogr. Commun.,} vol. 14, no. 3, pp. 677-690, May 2022.

        \bibitem{G} V. D. Goppa,
        ``Codes that are associated with divisors," \textit{Problemy Pereda$\check{c}$i Informacii,} vol. 13, no. 1, pp. 33-39, 1977.

        \bibitem{GLLS} G. Guo, R. Li, Y. Liu, and H. Song,
        ``Duality of generalized twisted Reed-Solomon codes and Hermitian self-dual MDS or NMDS codes," \textit{Cryptogr. Commun.,} vol. 15, no. 2, pp. 383-395, Mar. 2023.

        \bibitem{HZ} D. Han and H. Zhang,
        ``Explicit constructions of NMDS self-dual codes," \textit{Des. Codes Cryptogr.,} vol. 92, no. 11, pp. 3573-3585, Nov. 2024.

        \bibitem{H} M. Harada,
        ``On the existence of frames of the Niemeier lattices and self-dual codes over $\mathbb{F}_p$," \textit{J. Algebra,} vol. 321, no. 8, pp. 2345-2352, Apr. 2009.

        \bibitem{HMMF} F. Hernando, G. McGuire, F. Monserrat, and J. J. Moyano-Fern$\acute{a}$ndez,
        ``Quantum codes from a new construction of self-orthogonal algebraic geometry codes," \textit{Quantum Inf. Process.,} vol. 19, no. 4, p. 117, Apr. 2020.

        \bibitem{J} L. Jin,
        ``Quantum stabilizer codes from maximal curves," \textit{IEEE Trans. Inform. Theory,} vol. 60, no. 1, pp. 313-316, Jan. 2014.

        \bibitem{JH} L. Jin and H. Kan,
        ``Self-dual near MDS codes from elliptic curves," \textit{IEEE Trans. Inform. Theory,} vol. 65, no. 4, pp. 2166-2170, Apr. 2019.

        \bibitem{JLLX} L. Jin, S. Ling, J. Luo, and C. Xing,
        ``Application of classical Hermitian selforthogonal MDS codes to quantum MDS codes," \textit{IEEE Trans. Inform. Theory,} vol. 56, no. 9, pp. 4735-4740, Sep. 2010.

        \bibitem{JX} L. Jin and C. Xing,
        ``New MDS self-dual codes from generalized Reed-Solomon codes," \textit{IEEE Trans. Inform. Theory,} vol. 63, no. 3, pp. 1434-1438, Mar. 2017.

        \bibitem{JX2} L. Jin and C. Xing,
        ``Euclidean and Hermitian self-orthogonal algebraic geometry codes and their application to quantum codes," \textit{IEEE Trans. Inform. Theory,} vol. 58, no. 8, pp. 5484-5489, Aug. 2012.

        \bibitem{KKKS} A. Ketkar, A. Klappenecker, S. Kumar, and P. K. Sarvepalli,
        ``Nonbinary stabilizer codes over finite fields," \textit{IEEE Trans. Inform. Theory,} vol. 52, no. 11, pp. 4892-4914, Nov. 2006.

        \bibitem{LLGS} Z. Li, R. Li, C. Guan , H. Song,
        ``On the construction of hermitian self-orthogonal codes over $\mathbb{F}_9$ and their application," \textit{Inter. J. Theor. Physics,} vol. 63, no. 9, Sep. 2024. Art. no. 221.

        \bibitem{MX} L. Ma and C. Xing,
        ``Constructive asymptotic bounds of locally repairable codes via function fields," \textit{IEEE Trans. Inform. Theory,} vol. 66, no. 9, pp. 5395-5403, Sep. 2020.

        \bibitem{SWKS} M. Shi, S. Wang, J.-L. Kim, and P. Sol$\acute{e}$,
        ``Self-orthogonal codes over a non-unital ring and combinatorial matrices," \textit{Des. Codes Cryptogr.,} vol. 91, no. 2, pp. 677-689, Feb. 2023.

        \bibitem{Sok}L. Sok,
        ``New families of self-dual codes," \textit{Des. Codes Cryptogr.,} vol. 89, no. 5, pp. 823-841, May 2021.

        \bibitem{ST} H. Stichtenoth,
        ``Transitive and self-dual codes attaining the Tsfasman-Vladut-Zink bound," \textit{IEEE Trans. Inform. Theory,} vol. 52, no. 5, pp. 2218-2224, May 2006.

        \bibitem{S} H. Stichtenoth,
        \textit{Algebraic function fields and codes}(Graduate Texts in Mathematics), vol. 254. Berlin, Germany: Springer, 2009.

        \bibitem{S2} H. Stichtenoth,
        ``Self-dual Goppa codes," \textit{J. Pure Appl. Algebra,} vol. 55, no. 1-2, pp. 199-211, 1988.

        \bibitem{SYS} J. Sui, Q. Yue, and F. Sun,
        ``New constructions of self-dual codes via twisted generalized Reed-Solomon codes," \textit{Cryptogr. Commun.,} vol. 15, no. 5, pp. 959-978, Sep. 2023.

        \bibitem{Sv} F.-W. Sun and H. C. A. van Tilborg,
        ``The Leech lattice, the octacode, and decoding algorithms," \textit{IEEE Trans. Inform. Theory,} vol. 41, no. 4, pp. 1097-1106, Jul. 1995.

        \bibitem{WLZ} R. Wan, Y. Li, and S. Zhu,
        ``New MDS self-dual codes over finite field $\mathbb{F}_{r^2}$," \textit{IEEE Trans. Inform. Theory,} vol. 69, no. 8, pp. 5009-5016, Aug. 2023.

        \bibitem{WT} G. Wang and C. Tang,
        ``Some quantum MDS codes from GRS codes," \textit{Linear Multilinear Algebra,} vol. 72, no. 9, pp. 1418-1430, Jun. 2024.

        \bibitem{XFL} D. Xie, X. Fang, and J. Luo,
        ``Construction of long MDS self-dual codes from short codes," \textit{Finite Fields Appl.,} vol. 72, Jun. 2021. Art. no. 101813.

        \bibitem{HY} H. Yan,
        ``A note on the constructions of MDS self-dual codes," \textit{Cryptogr. Commun.,} vol. 11, no. 2, pp. 259-268, Mar. 2019.

        \bibitem{ZF} A. Zhang and K. Feng,
        ``A unified approach to construct MDS self-dual codes via Reed-Solomon codes," \textit{IEEE Trans. Inform. Theory,} vol. 66, no. 6, pp. 3650-3656, Jun. 2020.

        \bibitem{ZKL} J. Zhang, X. Kai, and P. Li,
        ``Self-orthogonal cyclic codes with good parameters," \textit{Finite Fields Appl.,} vol. 101, 2025. Art. no. 102534.

        \bibitem{ZLD} Z. Zhou, X. Li, C. Tang, and C. Ding,
        ``Binary LCD codes and selforthogonal codes from a generic construction," \textit{IEEE Trans. Inform. Theory,} vol. 65, no. 1, pp. 16-27, Jan. 2019.
    \end{thebibliography}
\end{document}